\def\showauthornotes{1}
\newcommand{\hide}[1]{} 
\newtheorem{theorem}{Theorem}[section]
\newtheorem{lemma}[theorem]{Lemma}
\newtheorem{definition}[theorem]{Definition}
\newtheorem{claim}[theorem]{Claim}
\newtheorem{remark}[theorem]{Remark}
\newtheorem*{rep@theorem}{\rep@title}
\newcommand{\newreptheorem}[2]{%
\newenvironment{rep#1}[1]{%
 \def\rep@title{#2 \ref{##1}}%
 \begin{rep@theorem}}%
 {\end{rep@theorem}}}
\DeclareMathOperator*{\MFN}{\mathrm{MFN}}
\newcommand{\MFNLP}[0]{\textsc{MFN-LP}\xspace}
\DeclareMathOperator*{\LPdemand}{\textnormal{LP}_{\textnormal{demand}}}
\renewcommand{\vec}[1]{{\bm{#1}}}
\newcommand{\vx}{{\bm{x}}}
\newcommand{\vy}{{\bm{y}}}
\newcommand{\vg}{{\bm{g}}}
\newcommand{\cA}{\mathcal A}
\newcommand{\cD}{\mathcal D}
\newcommand{\cF}{\mathcal F}
\newcommand{\cI}{\mathcal I}
\newcommand{\cP}{\mathcal P}
\newcommand{\cfl}{\textsc{Cfl}\xspace}
\newcommand{\ufl}{\textsc{Ufl}\xspace}
\DeclareMathOperator{\poly}{poly}
\newcommand{\ie}{i.e.,\xspace}
\newcommand{\Authornote}[2]{{\sffamily\small\color{red}{[#1: #2]}}}
\newcommand{\Authornotecolored}[3]{{\sffamily\small\color{#1}{[#2: #3]}}}
\newcommand{\Authorcomment}[2]{{\sffamily\small\color{gray}{[#1: #2]}}}
\newcommand{\Authorstartcomment}[1]{\sffamily\small\color{gray}[#1: }
\newcommand{\Authorfnote}[2]{\footnote{\color{red}{#1: #2}}}
\newcommand{\Authorfixme}[1]{\Authornote{#1}{\textbf{??}}}
\newcommand{\Authormarginmark}[1]{\marginpar{\textcolor{red}{\fbox{\Large #1:!}}}}
\newcommand{\Authornote}[2]{}
\newcommand{\Authornotecolored}[3]{}
\newcommand{\Authorcomment}[2]{}
\newcommand{\Authorstartcomment}[1]{}
\newcommand{\Authorfnote}[2]{}
\newcommand{\Authorfixme}[1]{}
\newcommand{\Authormarginmark}[1]{}
\tikzstyle{client}=[circle, draw=black,fill=white, inner sep=0pt, minimum size=20pt]
\tikzstyle{facility}=[draw=black,fill=white, inner sep=0pt, minimum size=18pt]
\tikzstyle{sclient}=[circle, draw=black,fill=white, inner sep=0pt, minimum size=6pt]
\tikzstyle{sfacility}=[draw=black,fill=white, inner sep=0pt, minimum size=5pt]
\tikzset{>= stealth'}
\title{\bfseries LP-Based Algorithms for Capacitated Facility Location}
\author{%
Hyung-Chan An\thanks{School of Computer and Communication Sciences, EPFL.
Email:
\href{mailto:hyung-chan.an@epfl.ch}{hyung-chan.an@epfl.ch}}, Mohit Singh\thanks{Microsoft Research,
Redmond. Email: \href{mailto:mohits@microsoft.com}{mohits@microsoft.com}.} , Ola
Svensson\thanks{School of Computer and Communication Sciences, EPFL.
Email:
\href{mailto:ola.svensson@epfl.ch}{ola.svensson@epfl.ch}. Supported by ERC Starting Grant 335288-OptApprox.}
}
\date{}
\begin{document}
\setcounter{page}{0}
\maketitle

\begin{abstract}
Linear programming has played a key role in the study of algorithms for
combinatorial optimization problems. In the field of approximation algorithms, this is well
illustrated by the uncapacitated facility location problem. A variety of algorithmic methodologies,
such as LP-rounding and primal-dual method, have been applied to and evolved from algorithms for this
problem. Unfortunately, this collection of powerful algorithmic techniques had not yet been applicable to the
more general
capacitated facility location problem. In fact, all of the known algorithms with good performance
guarantees were based on a single technique, local search, and no linear programming relaxation was known to efficiently approximate the problem.

In this paper, we present a linear programming relaxation with constant integrality gap for
capacitated facility location.
We demonstrate that the fundamental theories of multi-commodity flows and matchings provide key insights that lead to the strong relaxation. Our
algorithmic proof of integrality gap is obtained by finally accessing the rich toolbox of LP-based
methodologies: we present a constant factor approximation algorithm based on LP-rounding.


\end{abstract}

{\small \textbf{Keywords:}
approximation algorithms, facility location, linear programming}
\thispagestyle{empty}
\newpage

\section{Introduction}
We consider the metric capacitated facility location (\cfl) problem which together with the metric
uncapacitated facility location (\ufl) problem is the most classical and widely
studied variant of facility location.  In \cfl, 
we are given a single metric on the set of \emph{facilities} and \emph{clients}, and every facility
has an associated \emph{opening cost} and \emph{capacity}. The problem asks us to choose a subset of
facilities to open and assign every client to one of these open facilities, while ensuring that no
facility is assigned more clients than its capacity. Our aim is then to find a set of open
facilities and an assignment that minimize the cost, where the cost is defined as the sum of opening
costs of each open facility and the distance between each client and the facility it is assigned to.
\ufl is the special case of \cfl obtained by dropping the capacity constraints, or equivalently
setting each capacity to~$\infty$.

In spite of the similarities in the problem definitions of \ufl and \cfl,
current techniques give a considerably better understanding of the uncapacitated version.
One prominent reason for this discrepancy is that a standard linear programming (LP)
relaxation gives close-to-tight bounds for \ufl, whereas no good relaxation
was known in the presence of capacities.
 For \ufl, on the one hand,  the standard LP formulation has been used in combination with most LP-based techniques, such
as filtering~\cite{LV92B}, randomized rounding~\cite{CS04,STA97}, primal-dual framework~\cite{JV01},
and dual fitting~\cite{JMM03,JMS02}, to obtain a fine-grained understanding of the problem resulting
in a nearly tight approximation ratio~\cite{Li11}.

For \cfl, on the other hand, it has remained a major open problem to find a relaxation based
algorithms with
\emph{any} constant performance guarantee, also highlighted as Open Problem 5 in the list of ten open
problems selected by the recent textbook on approximation algorithms of Williamson and Shmoys~\cite{WS10}. This
question is especially intriguing as there exist constant factor approximation algorithms for \cfl
based on the local search paradigm (see e.g.~\cite{ALBGGGJ13,BGG12,CW05,KPR00,KH63,PTW01} and Section~\ref{sec:relatedwork} for further
discussion of this approach).
Compared to such methods, there are several advantages of  algorithms based on relaxations. First,
as alluded to above, there
is a large toolbox of LP-based techniques that one can tap into once a strong relaxation is
known for a problem. Second, LP-based algorithms give a stronger \emph{per instance} guarantee: that is, the
rounded solution is compared to the found LP solution and the gap is often smaller than the worst
case. This is in contrast to local search heuristics that only guarantees that the cost is no worse than
the proven \emph{a priori} performance guarantee assures. Finally, LP-based techniques are often flexible
and allow for generalizations to related problems. This has indeed been the case for the
uncapacitated versions where algorithms for \ufl are used in the design of approximation algorithms
for other related problems, see for example~\cite{ByrkaSS10,JV01,KrishnaswamyKNSS11,LS13}.

In this pursuit of LP-based approximation algorithms for capacitated facility location problems, the central question lies in devising a strong LP relaxation that is  \emph{algorithmically amenable}.
In fact, any combinatorial problem has a relaxation with constant integrality gap: the exact formulation,
which is the convex hull of integral solutions, has indeed an integrality gap of 1.  However, such 
formulations for NP-hard optimization problems usually have insufficient structure  to give enough insights for
designing efficient approximation algorithms. The challenge, therefore, is to instead devise an LP-relaxation that is
sufficiently strong while featuring enough structure so as to guide the development of efficient
approximation algorithms using LP-based techniques, such as LP-rounding or primal-dual.
However, formulating one for the capacitated facility location problem 
has turned out to be non-trivial.
Aardal et al.~\cite{APW95} made a
comprehensive study of valid inequalities for capacitated facility location problem and proposed further generalizations;
the  strength of the obtained formulations was left as an
open problem.
Many of these formulations were, however, recently proven to be insufficient for obtaining a
constant integrality gap by Kolliopoulos and Moysoglou~\cite{KM13b}. In the same paper it is also
shown that applying the Sherali-Adams hierarchy to the standard LP formulation will not close the
integrality gap.

\subsection{Our Contributions}
\label{sec:ourcontrib}
Our main contribution is a strong linear programming relaxation which has a constant integrality gap for the capacitated facility location problem. We prove its constant integrality gap by presenting a polynomial time approximation algorithm which rounds the LP solution.

\begin{theorem}\label{thm:main}
There is a linear programming relaxation (\MFNLP given in Figure~\ref{fig:MFN-LP}) for the capacitated facility location problem that has a constant integrality gap. Moreover, there exists a polynomial-time algorithm that finds a solution to the capacitated facility location problem whose cost is no more than a constant factor times the LP optimum.
\end{theorem}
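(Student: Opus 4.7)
The plan is to first describe the \MFNLP relaxation and then give a rounding algorithm that converts any fractional solution into an integral one while losing only a constant factor. To set up the LP, I would start from the natural fractional formulation with variables $y_i \in [0,1]$ for each facility and $x_{ij} \in [0,1]$ for each facility--client pair, augmented with the standard constraints $\sum_i x_{ij} = 1$, $x_{ij} \le y_i$, and $\sum_j x_{ij} \le u_i y_i$. The key idea behind \MFNLP, suggested by the name ``multi-commodity flow network'' and the paper's emphasis on multi-commodity flows and matchings, is to strengthen the standard LP by imposing, for every subset $S$ of facilities, constraints that force the remaining demand to be routable to facilities outside $S$ via a feasible multi-commodity flow in an auxiliary network. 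These constraints are valid for every integral solution, so the LP is a genuine relaxation. I would verify relaxation validity first.

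The rounding algorithm proceeds in several stages. First, I would perform a filtering step: for each client $j$, restrict attention to the facilities $i$ with $x_{ij}$ large enough, losing at most a constant factor in assignment cost by Markov's inequality. Second, I would cluster clients in the spirit of Shmoys--Tardos and Chudak--Shmoys: greedily pick client ``centers'' one at a time in some order (e.g., by fractional connection cost) so that the neighborhoods of distinct centers in the filtered bipartite graph are disjoint, and associate every non-center client with the nearest center. The clustering ensures that every client is within a bounded factor of its fractional connection cost from some center, using the triangle inequality.

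Third -- and this is where the multi-commodity flow constraints become essential -- I would decide which facilities to open and how to route clients. For each cluster, the \MFNLP constraints restricted to the neighborhood of its center guarantee the existence of a fractional solution that can be matched integrally: by using a matching-type argument (LP-integrality of bipartite matching / transportation polytopes), I would open an integral set of facilities whose total capacity suffices to absorb the clients in each cluster, and simultaneously obtain an integral assignment of clients to the opened facilities with assignment cost bounded by the fractional cost within the cluster plus a constant times the centers' connection cost. Combining the opening cost bound (since we open facilities supported by the LP's $y$-mass) and the assignment cost bound yields a constant-factor approximation.

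The main obstacle is the third stage: ensuring that the capacity constraints can be respected integrally without blowing up the assignment cost. In the uncapacitated setting, standard LP-rounding succeeds because opening any facility in a client's fractional support suffices; with capacities, merely opening a supporting facility does not guarantee a valid assignment, and the standard LP does not provide enough information to distribute demand integrally (this is precisely the source of the known integrality gap). The whole point of the multi-commodity flow constraints in \MFNLP is to certify, for every possible clustering that the rounding algorithm might produce, that demand is spreadable across facilities in a way that admits an integral matching of bounded cost. Working out exactly what multi-commodity flow inequalities suffice to both (i) be implied by every integral solution and (ii) power the matching argument is the technical heart of the proof, and I expect this to be where most of the work lies.
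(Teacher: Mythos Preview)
Your proposal diverges from the paper on both the formulation and the rounding, and the rounding sketch has a real gap.

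\textbf{The relaxation.} The constraints of \MFNLP are not indexed by subsets $S$ of facilities. They are indexed by \emph{valid partial assignments} $\vec g$ of clients to facilities (fractional $b$-matchings with $\sum_i g_{ij}\le 1$ and $\sum_j g_{ij}\le U_i$). For each such $\vec g$, the constraint is that a certain multi-commodity flow network $\MFN(\vec g,\vx,\vy)$ is feasible: every client $j$ must route its residual demand $d_j=1-\sum_i g_{ij}$ to facilities, where facility $i$ has commodity-oblivious capacity $y_i(U_i-\sum_j g_{ij})$ and commodity-specific capacity $y_i d_j$ for client $j$, and the network also contains ``back arcs'' of capacity $g_{ij}$ that let flow undo the partial assignment. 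This client-side parameterization is essential; the facility-subset version you sketch is closer in spirit to knapsack-cover inequalities, which the paper shows are only a special case.

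\textbf{The algorithm.} The paper does \emph{not} solve \MFNLP to optimality, and it does not use filtering or clustering. Instead it runs the ellipsoid method with a \emph{round-or-separate} oracle. Given a candidate $(\vx^\star,\vy^\star)$, it thresholds the opening variables (set $I=\{i:y_i^\star\ge 1/4\}$, $S=\cF\setminus I$), then computes a \emph{maximum} fractional $b$-matching $\vec z$ of clients into $I$ with edge capacities $2x^\star_{ij}$. From $\vec z$ and its residual graph it builds a single specific partial assignment $\vg^\star$ and tests feasibility of $\MFN(\vg^\star,\vx^\star,\vy^\star)$. If infeasible, Farkas/LP duality yields a violated inequality. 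If feasible, a key lemma shows the resulting flow can be chosen so that every client sends at least half its residual demand to facilities in $S$. Since every $i\in S$ has $y_i^\star\le 1/4$, scaling by $2$ gives a ``semi-integral'' solution in which the $S$-part is a feasible standard-LP solution with all openings $\le 1/2$; this residual instance is then rounded by an off-the-shelf $(18,2)$ bicriteria algorithm for \emph{soft}-capacitated facility location (Abrams et al.), which is harmless because the slack of $2$ in capacity is already available.

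\textbf{Where your sketch breaks.} Your third stage assumes that, within each cluster, the multi-commodity flow constraints will certify an integral capacitated assignment of bounded cost via bipartite-matching integrality. But the \MFNLP constraints do not localize to clusters in that way, and nothing in your outline explains why the fractional $y$-mass inside a cluster suffices to cover the cluster's demand integrally without violating capacities---this is exactly the obstacle that defeats standard-LP clustering for \cfl. The paper sidesteps this entirely: it never tries to open fractionally-supported facilities integrally by a local matching argument; instead it pushes all the ``hard'' residual demand onto facilities with small $y$, where capacities are automatically loose by a factor of~$2$, and invokes a soft-capacitated black box. The crucial technical work is Lemma~\ref{lem:flow} (existence of a flow sending $\ge d_j/2$ of each client's demand to $S$), which hinges on the careful choice of $\vg^\star$ via the maximum matching $\vec z$---none of which appears in your plan.
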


\noindent
This result resolves Open Problem 5 in the list of ten open problems selected by the textbook of Williamson and Shmoys~\cite{WS10}.

Our relaxation is formulated based on multi-commodity flows. We will discuss in this section why the multi-commodity flow is a natural tool of choice in designing strong LP relaxations for our problem, and also how it plays a key role, together with the matching theory, in achieving a constant factor LP-rounding algorithm.

One natural question that arises is characterizing the exact integrality gap of our relaxation. While we prioritized ease of reading over a better ratio in the choice of parameters for this presentation of our algorithm, it appears that the current analysis is not likely to give any approximation ratio better than $5$, the best ratio given by the local search algorithms~\cite{BGG12}. On the other hand, the best lower bound known on the integrality gap of our relaxation is $2$, and
 the question remains open whether we can obtain an approximation algorithm with a ratio smaller than $5$ based on our relaxation.
\\

\noindent
\textbf{Open Question.}
Determine the integrality gap of the LP relaxation \MFNLP.

\paragraph{High-level description of \MFNLP.}
The minimum knapsack problem is a special case of capacitated facility location: given a target value and a set of items with different values and costs, the problem is to find a minimum-cost subset of items whose total value is no less than the given target. Carr et al.~\cite{CFLP2000} showed that flow-cover inequalities~\cite{W, CFLP2000} yield an LP with a constant integrality gap for this problem; in fact, another aspect of our relaxation shares a similar spirit as these inequalities. The flow-cover inequalities for the minimum knapsack problem say that, when any subset of items is given for ``free'' to be part of the solution, the LP solution should be feasible to the residual problem. In this residual problem, the target value is decreased by the total target value of the free items; hence, constraints of the residual problem can be strengthened by updating the values of all items to be at most the new target value.

In order to have a similar notion of residual sub-problems in the facility location problem, it is tempting to formulate a sub-problem for each subset of facilities which are open for free. Indeed the knapsack problem suggests exactly this sub-problem, since in the reduction from the knapsack problem to the facility location problem, items correspond to facilities. However, we take a different approach. Observe that there are two types of decisions to be made in the facility location problem: which facilities to open, and how to assign the clients to these open facilities; we focus on the latter. We contemplate an assignment of a subset of clients to some facilities, and insist that this assignment should be a part of the solution. We formulate the residual problem on unassigned clients, update the capacity of each facility and reduce it by the number of clients assigned for free to this facility. We now require that any feasible solution to the problem must contain a feasible solution to the residual problem. We call the assignment of clients for free a \emph{partial assignment}, as they assign only a subset of clients.

\begin{figure}[t]
\centering
\begin{tikzpicture}
\begin{scope}[scale=0.5]
\draw[rounded corners] (-5.35,-4.6) rectangle (5.35,4);
\node at (0,-2.9) {\begin{minipage}{5cm}\small (a) A feasible integral solution. Both facilities are open, and lines show the assignment.\\ \ \end{minipage}};

\draw node[sfacility] (i1) at (-3,2.5) {};
\node[above = 0.1cm] at (i1) {\small $i_1$};
\draw node[sfacility] (i2) at (3,2.5) {};
\node[above = 0.1cm] at (i2) {\small $i_2$};

\foreach \vali in {1, ..., 6} {
  \draw node[sclient] (a\vali) at (-5.1+ \vali*0.6, 0) {};
  \draw (a\vali) edge (i1);
}

\foreach \vali in {1, ..., 6} {
  \draw node[sclient] (b\vali) at (0.9+ \vali*0.6, 0) {};
  \draw (b\vali) edge (i2);
}
\node[below right = -0.1cm and 0cm] at (b6) {\small $j$};
\end{scope}

\begin{scope}[scale=0.5,xshift=11cm]
\draw[rounded corners] (-5.35,-4.6) rectangle (5.35,4);
\node at (0,-2.9) {\begin{minipage}{5cm}\small (b) A partial assignment.  Client $j$ is unassigned.\\ \ \\ \ \end{minipage}};

\draw node[sfacility] (i1) at (-3,2.5) {};
\node[above = 0.1cm] at (i1) {\small $i_1$};
\draw node[sfacility] (i2) at (3,2.5) {};
\node[above = 0.1cm] at (i2) {\small $i_2$};

\foreach \vali in {1, ..., 6} {
  \draw node[sclient] (a\vali) at (-5.1+ \vali*0.6, 0) {};
  \draw (a\vali) edge (i2);
}

\foreach \vali in {1, ..., 6} {
  \draw node[sclient] (b\vali) at (0.9+ \vali*0.6, 0) {};
}
\node[below right = -0.1cm and 0cm] at (b6) {\small $j$};
\end{scope}

\begin{scope}[scale=0.5,xshift=22cm]
\draw[rounded corners] (-5.35,-4.6) rectangle (5.35,4);
\node at (0,-2.9) {\begin{minipage}{5cm}\small (c) Original solution, augmented with partial assignment edges marked as dashed lines. Thick lines represent alternating path for $j$. \end{minipage}};
\draw node[sfacility] (i1) at (-3,2.5) {};
\node[above = 0.1cm] at (i1) {\small $i_1$};
\draw node[sfacility] (i2) at (3,2.5) {};
\node[above = 0.1cm] at (i2) {\small $i_2$};

\foreach \vali in {1, ..., 5} {
  \draw node[sclient] (a\vali) at (-5.1+ \vali*0.6, 0) {};
  \draw (a\vali) edge (i1);
  \draw[dashed] (a\vali) edge (i2);
}
\foreach \vali in {6} {
  \draw node[sclient] (a\vali) at (-5.1+ \vali*0.6, 0) {};
  \draw[ultra thick,blue] (a\vali)[->] edge (i1);
  \draw[dashed, ultra thick, blue] (a\vali) edge[<-] (i2);
}

\foreach \vali in {1, ..., 5} {
  \draw node[sclient] (b\vali) at (0.9+ \vali*0.6, 0) {};
  \draw (b\vali) edge (i2);
}
\foreach \vali in {6} {
  \draw node[sclient] (b\vali) at (0.9+ \vali*0.6, 0) {};
  \draw[ultra thick, blue] (b\vali) edge[->] (i2);
}
\node[below right = -0.1cm and 0cm] at (b6) {\small $j$};
\end{scope}

\end{tikzpicture} 
\caption{Example of partial assignment. Squares represent facilities of capacity 6; circles clients.}
\label{fig:example}
\end{figure}

While the residual instance would be again an instance of the capacitated facility location, with fewer clients and facilities with reduced capacities, it is not clear whether restricting a feasible solution of the original problem forms a feasible solution to the residual problem. In fact, it does not. The partial assignment reduces capacities at facilities which the feasible solution might have used for clients remaining in the residual instance. To be concrete, consider a feasible integral solution depicted in Figure~\ref{fig:example}a and a partial assignment in Figure~\ref{fig:example}b. Note that client $j$ was not assigned by the partial assignment, but in the residual instance, it cannot be assigned to facility $i_2$ as the original solution indicates. The partial assignment has already assigned enough clients to reduce the capacity of facility $i_2$ to zero in the residual instance. But observe that the fact that client $j$ could not claim its original place means that some other client has taken its place; therefore, that client must have left behind its
space somewhere else (at facility $i_1$ in this example). Thus we would want to assign client $j$ to facility $i_1$ in the example. But how can we enable
such an assignment in general?
Our relaxation allows additional edges to be used for assignments in the residual instance. In particular, we make edges corresponding to the partial assignment available to be used to ``undo'' the partial assignment;
observe that what we are now looking for is not a direct assignment of clients to facilities but \emph{alternating paths} starting at each client in the residual instance to a facility with spare reduced capacity.
We model this problem as a \emph{multi-commodity flow} problem where every unassigned client demands a unit flow to be routed to a facility with residual capacity.
In fact, it is crucial for obtaining a strong LP to use multi-commodity flows to model these assignments, as we will see in Section~\ref{sec:lp}.

For the interested reader, we further relate our relaxation to previous works by demonstrating in Section~\ref{sec:knapsack} how it automatically embraces two interesting special cases, including the flow-cover inequalities for the knapsack problem. In addition, we show how our relaxation deals with a specific instance, for which the standard LP has an unbounded integrality gap.

\paragraph{LP-rounding algorithm.}
In Section~\ref{sec:algorithm}, we give an algorithmic proof of constant integrality gap by presenting a polynomial-time LP-rounding algorithm. An interesting feature of this algorithm is that it does not solve the LP to optimality. Instead, we will give a rounding procedure that either rounds a given fractional solution within a constant factor, or identifies a violated inequality. This approach has been previously used, see for example, Carr et al.~\cite{CFLP2000} and Levi et al.~\cite{LeviLS07}.

As is the case for flow-cover inequalities, we do not know whether our relaxation can be separated in polynomial time. However, our rounding algorithm establishes that it suffices to separate it over a given partial assignment in order to obtain a constant approximation algorithm: in a sense, such limited separation is already enough to extract the power of our strong relaxation within a constant factor.
That said, it remains an interesting open question whether our relaxation can be separated in polynomial time. Another interesting open question would be whether there exists a different LP relaxation that can be solved in polynomial time and used to design a constant approximation algorithm.

Given a fractional solution consisting of \emph{opening variables} and \emph{assignment variables}, the first step our rounding algorithm takes is very natural: we decide to open all the facilities whose opening variables are large, say, at least $\frac12$. The cost of opening these facilities is no more than twice the cost paid by the fractional solution. Now, we find an assignment of \emph{maximum} number of clients to these facilities while maintaining that the assignment cost does not exceed twice the cost of fractional solution. If we manage to assign all clients to the integrally opened facilities, we are done since both the connection cost and facility opening cost can be bounded within a constant factor of the linear programming solution. Else, we obtain a partial assignment of clients to the opened facilities. We use this partial assignment
to formulate the multi-commodity flow problem described earlier. Recall, in the multi-commodity flow problem, each unassigned client has a flow commodity which it needs to sink at the facilities using alternating paths. Assume for simplicity, that in the partial assignment all facilities that we opened in the first step are saturated.
Now, in the multi-commodity flow problem, a client can only sink flow at facilities with small fractional value because the facilities with large fractional value have zero capacity since they are saturated by the partial assignment. Thus, the flow solution naturally gives us a fractional assignment of remaining unassigned clients to facilities which are open to a small fractional value. In the last step of the algorithm, we round this fractional solution obtained via the flow problem. But why is this problem any easier than the one we started with? Since each facility opening variable is at most $\frac12$, the fractional solution
can use
at most half the capacity of any facility in the residual instance. Thus the capacity constraints are not stringent and we can appeal to known soft-capacitated approximation algorithms which approximate cost while violating capacity to a small factor (two suffices for us). Indeed, such algorithms can be obtained by rounding the standard linear program and we use the result of Abrams et al.~\cite{AMMP2002}. This also implies that an immediate improvement to the approximation ratio of our algorithm would be possible by providing an improved algorithm for the soft-capacitated problem.

In summary, we
have used techniques from the theory of multi-commodity flows and matchings to formulate the first
linear programming relaxation for the capacitated facility location problem that efficiently approximates the optimum value
within a constant. Our proposed LP-rounding algorithm exploits the properties of the multi-commodity flows obtained by solving the linear program and we give a constant factor approximation algorithm for the problem.
Our results further open up the possibility to approach the capacitated facility location problem and other related problems using the large
family of known powerful LP-based techniques.


\subsection{Further Related Work}
\label{sec:relatedwork}

\paragraph{Uncapacitated facility location.}
Since the first constant factor approximation algorithm for \ufl was given by Shmoys, Tardos and Aardal~\cite{STA97}, several techniques have been developed around this problem. Currently, the best approximation guarantee of $1.488$ is due to Li~\cite{Li11}; see also ~\cite{BA10,JMS02}.
On the hardness side, Guha and Khuller~\cite{GK99} shows that it is hard to approximate \ufl within a factor of 1.463.

\paragraph{Local search heuristics for capacitated facility location.} All
previously known constant factor
approximation algorithms for \cfl are based on the local
search paradigm. The first constant factor approximation algorithm was obtained in the special case
of uniform capacities (all capacities being equal) by Korupolu et al.~\cite{KPR00} who analyzed a
simple local search heuristic proposed by Kuehn and Hamburger~\cite{KH63}. Their analysis was then
improved by Chudak and Williamson~\cite{CW05} and the current best $3$-approximation algorithm for
this special case is a local search by Aggarwal et al.~\cite{ALBGGGJ13}. For the general problem
(\cfl), P\'{a}l et al~\cite{PTW01} gave the first constant factor approximation algorithm. Since
then more and more sophisticated local search heuristics have been proposed, the current best being a
recent local search by Bansal et al.~\cite{BGG12} which yields a $5$-approximation algorithm.

\paragraph{Relaxed notions of capacity constraints.} Several special cases or relaxations of
the capacitated facility location problem have been studied. One popular relaxation is the soft-capacitated problems where the capacity constraints are relaxed in various ways. The standard linear program still gives a good bound for many of these relaxed problems. Shmoys et al.~\cite{STA97} gives the first constant factor approximation
algorithm where a facility is allowed to be open multiple times, later improved by Jain and
Vazirani~\cite{JV01}. Mahdian et al.~\cite{MYZ03} gives the current best approximation ratio of $2$,
which is tight with respect to the standard LP. Abrams et al.~\cite{AMMP2002} studies a variant
where a facility can be open at most once, but the capacity can be violated by a constant factor.
We also mention that in our approximation algorithm, we use this variant of relaxed capacities as a subproblem.
Finally, another special case for which the standard LP has been amenable to is the case of uniform opening
costs, \ie when all facilities have the
same opening cost. For that case, Levi et al.~\cite{LSS12} gives a 
$5$-approximation algorithm.

We also mention that LP-based approximation algorithms which do not solve the linear program to optimality have been used in the works of Carr et. al~\cite{CFLP2000} and Levi et. al.~\cite{LeviLS07}. In a similar spirit, many primal-dual algorithms do not solve linear programs to optimality (see e.g.~\cite{AKR, GW}), while finding approximate solutions whose guarantee is given by comparison to a feasible dual solution.

Finally, we note that Chakrabarti, Chuzhoy and Khanna~\cite{CCK} used a collection of flow problems to obtain improved approximation algorithms for the max-min allocation problem.


\section{Multi-commodity Flow Relaxation}\label{sec:lp}
We present our new relaxation for the capacitated facility location problem in this section. Let us first define some notation to be used in the rest of this paper. Let $\cF$ be the set of facilities and $\cD$ be the set of clients. Each facility $i\in\cF$ has opening cost $o_i$, and cannot be assigned more number of clients than its capacity $U_i$. We are also given a metric cost $c$ on $\cF\cup\cD$ as a part of the input: $c_{ij}$ denotes the distance between $i\in\cF$ and $j\in\cD$.

The variables of our relaxation is the pair $(\vec{x}, \vec{y})$ where we refer to $\vec{x}\in
[0,1]^{\cF \times \cD}$ as the \emph{assignment variables} and to $\vec{y}\in [0,1]^{\cF}$ as the \emph{opening variables}. These variables naturally encode the decisions to which facility a client is connected
and which facilities that are opened. Indeed, the intended integral solution is that $x_{ij} =1$ if
client $j$ is connected to facility $i$ and $x_{ij} = 0$ otherwise; $y_i=1$ if facility $i$ is
opened and $y_i = 0$ otherwise. The idea of our relaxation is that every partial assignment of
clients to facilities should be extendable to a complete assignment while only using the assignments
of $\vec{x}$ and openings of $\vec{y}$. To this end let us first describe the partial assignments
that we shall consider. We then define the constraints of our linear program which will be
feasibility constraints of multi-commodity flows.

A partial fractional assignment $\vec{g}=\{g_{ij}\}_{i\in \cF, j \in \cD}$ of clients to facilities is  \emph{valid} if
$$
\forall j\in
\cD: \sum_{i\in \cF} g_{ij} \leq 1   \mbox{,} \qquad  \forall i \in \cF: \sum_{j\in \cD}
g_{ij} \leq U_i \qquad \mbox{and} \qquad  \forall i \in \cF, j\in \cD: g_{ij}\geq 0.
$$
The first condition says that each client should be fractionally assigned at most once and the
second condition says that no facility should receive more clients than its capacity. We emphasize
that we allow clients to be fractionally assigned, \ie $\vec{g}$ is not assumed to  be integral. As we
shall see later (see Lemma~\ref{lem:bing}), this does not change the strength of our relaxation but it will
be convenient in the analysis of our rounding algorithm in Section~\ref{sec:algorithm}. We also remark that
the above inequalities are exactly the $b$-matching polytope of the complete bipartite graph
consisting of the clients on the one side and the facilities on the other side; each client can be
matched to at most one facility and each facility $i$ can be matched to at most $U_i$ clients.

The constraints of our relaxation will be that, no matter how we partially assign the clients according to a valid
$\vec{g}$, the solution $(\vec{x},\vec{y})$ should support a multi-commodity flow where each client $j$ becomes the source of its own commodity $j$, and the demand of this commodity is equal to the amount by which $j$ is ``not assigned'' by $\vec g$, $1-\sum_{i\in \cF} g_{ij}$. The flow network, whose arc capacities are given as a function of $\vec{g}$ and the solution $(\vec{x},\vec{y})$, is defined as follows (see also Figure~\ref{fig:MFN}):
\begin{definition}[Multi-commodity flow network]
\label{def:mfn}
For a valid partial assignment $\vec{g}$,  assignment variables $\vec{x} = \{x_{ij}\}_{i\in \cF, j\in \cD}$, and opening variables $ \vec{y}=
\{y_i\}_{i\in \cF}$, let $\MFN(\vec{g},\vec{x},\vec{y})$ be a
multi-commodity flow network with $|\cD|$ commodities, defined as follows. Note that some arcs may have zero capacities.
\begin{enumerate}[(a)]
\item Each client $j\in \cD$ is associated with commodity $j$ of demand $d_j := 1 - \sum_{i\in \cF} g_{ij}$, and its source-sink pair is $(j^s, j^t)$.  \item Each facility $i\in \cF$ has two nodes $i$ and $i'$ with an arc $( i,i')$ of capacity $y_i\cdot( U_i - \sum_{j\in \cD} g_{ij})$.  \item For each client $j$ and facility $i$, there is an arc $( j^s, i)$ of capacity $x_{ij}$, an arc
  $( i, j^s) $ of capacity $g_{ij}$, and an arc $( i', j^t) $ of capacity $y_i d_j$.
\end{enumerate}
\end{definition}

\begin{figure}[tb]
\begin{center}
\begin{tikzpicture}

\node[client] (j1) at (-1.5, 0)  {$j^s_1$};
\node[client] (j2) at (-1.5, -1.5)  {$j^s_2$};
\node at (-1.5,-3) {$\vdots$};
\node[client] (jn) at (-1.5, -4.5)  {$j^s_n$};

\node (d) at (-1.7, 1.5) {\small demand $d_{j_1} = 1 - \sum_{i\in \cF} g_{ij_1}$};

\draw (d) edge[dotted] (j1);

\node[facility] (i1) at (1.5,-0.5) {$i_1$};

\node[facility] (i2) at (1.5,-1.75) {$i_2$};

\node at (1.5,-2.9) {$\vdots$};
\node[facility] (im) at (1.5,-4) {$i_m$};

\node[facility] (i'1) at (6.5,-0.5) {$i'_1$};

\node[facility] (i'2) at (6.5,-1.75) {$i'_2$};

\node at (6.5,-2.9) {$\vdots$};
\node[facility] (i'm) at (6.5,-4) {$i'_m$};

\node[client] (t1) at (9.7,0 )  {$j^t_1$};
\node[client] (t2) at (9.7,-1.5 )  {$j^t_2$};
\node at (9.7,-3 ) {$\vdots$};
\node[client] (tn) at (9.7,-4.5 )  {$j^t_n$};

\draw (j1) edge[->,ultra thick, bend left] node[above] {\small capacity $x_{i_1j_1}$}   (i1);
\draw[gray] (j1) edge[<->] (i2);
\draw[gray] (j1) edge[<->] (im);

\draw[gray] (j2) edge[<->] (i1);
\draw[gray] (j2) edge[<->] (i2);
\draw[gray] (j2) edge[<->] (im);

\draw[gray] (jn) edge[<->] (i1);
\draw[gray] (jn) edge[<->] (i2);
\draw[gray] (jn) edge[->] (im);

\draw (jn) edge[<-, ultra thick, bend right] node[below] {\small capacity $g_{i_mj_n}$}  (im);

\draw (i1) edge[ultra thick,->] node[above] {\small\begin{tabular}{c}capacity\\$y_{i_1} \cdot (U_{i_1} -  \sum_{j\in \cD} g_{i_1 j})$\end{tabular}} (i'1);
\draw[gray] (i2) edge[->] (i'2);
\draw[gray] (im) edge[->] (i'm);

\draw (i'1) edge[ultra thick,->] node[above=0.2] {\small capacity $ y_{i_1} d_{j_1}$}(t1);
\draw[gray] (i'1) edge[->] (t2);
\draw[gray] (i'1) edge[->] (tn);

\draw[gray] (i'2) edge[->] (t1);
\draw[gray] (i'2) edge[->] (t2);
\draw[gray] (i'2) edge[->] (tn);

\draw[gray] (i'm) edge[->] (t1);
\draw[gray] (i'm) edge[->] (t2);
\draw[gray] (i'm) edge[->] (tn);

  \begin{pgfonlayer}{background}
    \filldraw [line width=4mm,join=round,black!10]
      (-1.8,0.7)  rectangle (1.8,-5.2);
  \end{pgfonlayer}

\end{tikzpicture}
\end{center}

\caption{A depiction of the multi-commodity flow network $\MFN(\vec{g},\vec{x},\vec{y})$.}
\label{fig:MFN}
\end{figure}

\begin{remark}
Intuitively, the bipartite subgraph induced by $\{j^s\}_{j\in\cD}\cup\{i\}_{i\in\cF}$, marked with a shaded box in Figure~\ref{fig:MFN}, is the interesting part of the flow network. $\{i'\}_{i\in\cF}$ and $\{j^t\}_{j\in\cD}$ are added to this bipartite graph purely in order to state that $i$ is a sink with ``double'' capacities: a commodity-oblivious capacity $y_i\cdot( U_i - \sum_{j\in \cD} g_{ij})$ and a commodity-specific capacity $y_i d_j$ for each client $j\in \cD$.

\end{remark}

Let us give some intuition on the definition of $\MFN(\vec{g},\vec{x},\vec{y})$. As already noted,
the demand $d_j = 1 - \sum_{i\in \cF} g_{ij}$ of a client $j$ equals the amount by which $j$ is not
assigned by the partial assignment $\vec{g}$. This demand should only be assigned to opened
facilities. Therefore, facility $i$ can accept at most $y_i d_j$ of $j$'s demand which is either
$d_j$ or $0$ in an integral solution.
Observe that such a constraint, for each client and facility, cannot be imposed by a single-commodity flow problem.
Multi-commodity flow problems, on the other hand, allows us to express this constraint as a commodity-specific capacity of $y_id_j$, as denoted by arc $(i',j^t)$ in Figure~\ref{fig:MFN}.

Now consider the \emph{commodity-oblivious} capacities of the facilities. The \emph{total} demand an opened facility $i$ can accept
is its capacity minus the amount of clients assigned to it in the partial assignment $\vec{g}$; and
a closed facility can accept no demand. Therefore, the total demand a facility $i$ can accept is at
most $y_i (U_i - \sum_{j\in \cD} g_{ij})$. The arc capacity $x_{ij}$ of an arc $(j,i)$ says that
client $j$ should be connected to facility $i$ only if $x_{ij} =1$. The reason for having arcs of
the form $(i,j)$ of capacity $g_{ij}$ is discussed in Section~\ref{sec:ourcontrib}: these allow the alternating paths for routing the remaining demand and are
necessary for the formulation to be a relaxation.

We are now ready to formally state our relaxation \MFNLP of the  capacitated facility
location problem in Figure~\ref{fig:MFN-LP}. Note that the only variables of our relaxation are the assignment variables $\vec x$ and the opening variables $\vec y$. While it is natural to formulate each of the multi-commodity flow problem using auxiliary variables denoting the flow, our algorithm will utilize the equivalent formulation obtained via projecting out the flow variables. This projected formulation is a relaxation where the only variables are assignment variables $\vec x$ and the opening variables $\vec y$.
\begin{figure}[h!]
\begin{equation*}
\boxed{%
        \begin{minipage}{10cm}%
          \begin{align*}
            \textrm{minimize}\qquad & c(\vx,\vy):= \sum_{i\in \cF} o_i\cdot y_i  + \sum_{i\in \cF, j\in \cD} c_{ij} \cdot x_{ij},\\
            \textrm{subject to}\qquad & \MFN(\vec{g}, \vec{x}, \vec{y}) \mbox{ is feasible}\quad \forall \vec g\textrm{ valid};\\[2mm]
            &\vec{x}\in [0,1]^{\cF\times\cD},\vec{y}\in[0,1]^{\cF}.
          \end{align*}
          \vspace{-0.4cm}
        \end{minipage}%
}
\end{equation*}
\caption{Our relaxation of \cfl.}
\label{fig:MFN-LP}
\end{figure}

In Lemma~\ref{lem:bing} we show that the constraints of \MFNLP can equivalently be formulated over
the subset of valid partial assignments that are integral. \MFNLP can therefore be seen as the
intersection of the feasible regions
of finitely many multi-commodity flow linear programs and is therefore itself a linear program.
At first sight, however, it may not be clear that \MFNLP is a relaxation, or how we can separate it.
We will answer these questions in the rest of this section.

\subsection{Integral Partial Assignments and Separation}

We first present a useful lemma that allows us to consider only the valid assignments $\vec{g}$ that are integral, i.e., $\{0,1\}$-matrices. This lemma follows from the integrality of the $b$-matching polytope.

\begin{lemma}
\label{lem:bing}
For any $(\vec{x}, \vec{y})$, $\MFN(\vec{g}, \vec{x}, \vec{y})$ is feasible for all valid $\vec{g}$
if and only if $\MFN(\vec{\hat g}, \vec{x}, \vec{y})$ is feasible for all valid $\vec{\hat g}$  that
are integral.
\end{lemma}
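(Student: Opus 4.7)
The forward implication is immediate: any integral valid $\vec{\hat g}$ is in particular a valid $\vec g$, so feasibility of $\MFN$ for all valid $\vec g$ trivially implies feasibility for all integral valid $\vec{\hat g}$. The substantive direction is the converse, and my plan is to reduce fractional valid partial assignments to integral ones by a convex-combination argument, leveraging the hint in the statement that the conclusion rests on the integrality of the $b$-matching polytope.

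First I would observe that the set of valid partial assignments $\vec g$ is exactly the bipartite $b$-matching polytope: the constraints $\sum_i g_{ij}\le 1$, $\sum_j g_{ij}\le U_i$, $g_{ij}\ge 0$ form a system whose constraint matrix is the incidence matrix of a bipartite graph, hence totally unimodular. Therefore every vertex of this polytope is integral, and by Carath\'eodory any valid $\vec g$ can be written as a convex combination $\vec g = \sum_{k} \lambda_k \vec{\hat g}_k$ of integral valid partial assignments $\vec{\hat g}_k$, with $\lambda_k \ge 0$ and $\sum_k \lambda_k = 1$.

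Next, for each $k$ let $f^{(k)}$ be a feasible multi-commodity flow in $\MFN(\vec{\hat g}_k, \vec x, \vec y)$, which exists by assumption. I would then propose the convex combination $f := \sum_k \lambda_k f^{(k)}$ as a candidate feasible flow in $\MFN(\vec g, \vec x, \vec y)$, and verify feasibility arc by arc. The crucial point is that every quantity in Definition~\ref{def:mfn} is either independent of $\vec g$ (the arcs $(j^s,i)$ of capacity $x_{ij}$) or depends \emph{linearly} on $\vec g$: the demand $d_j = 1-\sum_i g_{ij}$, the capacity $y_i(U_i-\sum_j g_{ij})$ of $(i,i')$, the capacity $g_{ij}$ of $(i,j^s)$, and the capacity $y_i d_j$ of $(i',j^t)$. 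So for every arc, the capacity in $\MFN(\vec g,\vec x,\vec y)$ is precisely $\sum_k \lambda_k$ times the corresponding capacity in $\MFN(\vec{\hat g}_k,\vec x,\vec y)$, and similarly for demands. Consequently, $f$ respects capacities and routes exactly the demand $d_j(\vec g)$ for each commodity, i.e., $f$ is feasible.

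I do not anticipate a serious obstacle here; the only point that needs a moment of care is verifying that the linearity holds uniformly for all four arc families and for the demands, so that a single convex combination of flows works across the entire network simultaneously. If one wanted to avoid invoking Carath\'eodory on a potentially large vertex set, one could alternatively prove the statement by induction on the number of fractional entries of $\vec g$, swapping cycles along a fractional cycle in the support of $\vec g$ as in a standard matching-rounding argument, but the convex-combination route is cleaner and matches the lemma's stated reliance on $b$-matching integrality.
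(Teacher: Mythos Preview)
Your proposal is correct and follows essentially the same approach as the paper: decompose the fractional valid $\vec g$ as a convex combination of integral valid assignments via the integrality of the $b$-matching polytope, take the corresponding convex combination of feasible flows, and conclude feasibility from the linearity of all capacities and demands in $\vec g$. The paper's proof is slightly terser but makes exactly the same argument.
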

\begin{proof}
  It is clear that if the flow network is feasible for all valid $\vec{g}$ then it is also feasible
  for the subset that are integral. We  show the harder side. Suppose $\MFN(\vec{\hat g}, \vec{x},
  \vec{y})$ is feasible for all valid $\vec{\hat g}$  that are integral and consider an arbitrary valid assignment $\vec{g}$
  that may be fractional. We will show that $\MFN(\vec{g}, \vec{x}, \vec{y})$ is feasible.

  Construct a complete bipartite graph with vertices $\cF \cup \cD$ and interpret $\vec{g}$ as the
 weights on the edges of this complete bipartite graph. As $\vec{g}$ is valid, we have $\sum_{j\in \cD}
  g_{ij} \leq U_i$ for each $i\in \cF$ and
  $\sum_{i\in \cF} g_{ij} \leq 1$ for each $j\in \cD$. In other words, $\vec{g}$ is a fractional
  solution to the $b$-matching polytope.  By the integrality of the $b$-matching polytope~(see e.g. \cite{S2003}),  we
  can write $\vec{g}$ as a convex combination of valid integral assignments $\vec{\hat g}^1, \vec{\hat
    g}^2, \ldots, \vec{\hat g}^r$, \ie there exist $\lambda_1, \lambda_2, \ldots, \lambda_r \geq
  0$ such that $\sum_{k=1}^r \lambda_k = 1$ and $\vec{g} = \sum_{k=1}^r \lambda_k\vec{\hat g}^k$.

  Now, let $\vec f^k$ denote the feasible flow for $\MFN(\vec{\hat g}^k,
  \vec{x},\vec{y})$, and choose $\vec f=\sum_{k} \lambda_k \vec f^k$. Observe that $\vec f$ is a feasible solution to $\MFN(\vec{g}, \vec{x}, \vec{y})$, since all the capacities and demands of $\MFN(\cdot,\vec x,\vec y)$ are given as linear functions of $\vec g$.
\end{proof}

A natural question is whether \MFNLP can be separated in polynomial time. While we currently do not
know if this is the case, we will establish in this paper that the feasibility constraint of
$\MFN(\vec { g},\vec x,\vec y)$ can be separated for any fixed $\vec{ g}$, and that this is
sufficient to find a fractional solution whose cost is within a constant factor from the optimum: in
a sense, this oracle enables us to extract the power of our strong relaxation within a constant
factor. The following lemma states the oracle. It follows from known characterizations using
LP-duality of multi-commodity flows and its proof can be found in Appendix~\ref{ap:lp}.

\begin{lemma}
\label{lem:linprog}
Given $\vec{ g^{\star}}$ in addition to $(\vec x^{\star},\vec y^{\star})$ such that $\MFN(\vec{ g^{\star}},\vec x^{\star},\vec y^{\star})$ is infeasible, we can find in polynomial time a violated inequality, i.e., an inequality that is valid for \MFNLP but violated by $(\vec x^{\star},\vec y^{\star})$. Moreover, the number of bits needed to represent each coefficient of this inequality is polynomial in $|\cF|$, $|\cD|$, and $\log U$, where $U:=\max_{i\in\cF}U_i$.
\end{lemma}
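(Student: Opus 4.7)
The plan is to invoke LP-duality for multi-commodity flow feasibility. For fixed $\vec{g^\star}$ the network $\MFN(\vec{g^\star},\vec{x},\vec{y})$ has a fixed underlying digraph, all commodity demands $d_j=1-\sum_{i\in\cF} g^\star_{ij}$ are constants, and every arc capacity $u_e$ is an explicit linear function of $(\vec{x},\vec{y})$. Hence feasibility of $\MFN(\vec{g^\star},\vec{x^\star},\vec{y^\star})$ is exactly solvability of an LP in the flow variables $\{f^j_e\}$ whose constraint matrix has entries in $\{0,\pm 1\}$ and whose right-hand sides are these linear functions evaluated at $(\vec{x^\star},\vec{y^\star})$.

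By LP-duality, infeasibility is witnessed by nonnegative arc lengths $\ell^\star=\{\ell^\star_e\}$ for which
\[ \sum_{j\in\cD} d_j\cdot\mathrm{dist}_{\ell^\star}(j^s,j^t) \;>\; \sum_e u_e(\vec{g^\star},\vec{x^\star},\vec{y^\star})\,\ell^\star_e; \]
such an $\ell^\star$ can be computed in polynomial time by solving the dual LP. I would then exhibit the inequality
\[ \sum_e u_e(\vec{g^\star},\vec{x},\vec{y})\,\ell^\star_e \;\ge\; \sum_{j\in\cD} d_j\cdot\mathrm{dist}_{\ell^\star}(j^s,j^t), \]
whose left-hand side is a linear function of $(\vec{x},\vec{y})$ since $\vec{g^\star}$ and $\ell^\star$ are fixed and whose right-hand side is a constant. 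It must hold whenever $\MFN(\vec{g^\star},\vec{x},\vec{y})$ is feasible, so by Lemma~\ref{lem:bing} it is valid for every point of \MFNLP, and is violated at $(\vec{x^\star},\vec{y^\star})$ by construction. Expanding the capacities, the coefficient on $x_{ij}$ is $\ell^\star_{(j^s,i)}$, the coefficient on $y_i$ is $(U_i-\sum_{j\in\cD} g^\star_{ij})\,\ell^\star_{(i,i')}+\sum_{j\in\cD} d_j\,\ell^\star_{(i',j^t)}$, and after moving the arc-$(i,j^s)$ contribution to the right-hand side, the constant is $\sum_{j\in\cD} d_j\cdot\mathrm{dist}_{\ell^\star}(j^s,j^t)-\sum_{i\in\cF,\,j\in\cD} g^\star_{ij}\,\ell^\star_{(i,j^s)}$.

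For the bit-complexity bound, by Lemma~\ref{lem:bing} I may assume $\vec{g^\star}$ is integral, so each $d_j\in\{0,1\}$ and each $U_i-\sum_{j} g^\star_{ij}\in\{0,\ldots,U\}$, while all other coefficients of $(\vec{x},\vec{y})$ appearing in the $u_e$ are in $\{0,1\}$. Hence it suffices to bound the entries of $\vec{\ell^\star}$. I would obtain $\vec{\ell^\star}$ as an extreme ray of the dual cone certifying unboundedness of the dual objective: the dual's defining inequalities have $\{0,\pm 1\}$-coefficients, so Hadamard's bound applied via Cramer's rule lets me rescale this ray to integer coordinates of magnitude at most $\mathrm{poly}(|\cF|,|\cD|)$. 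Combined with the previous observation, every coefficient of the separating inequality then has bit-complexity polynomial in $|\cF|$, $|\cD|$, and $\log U$. The main subtlety is ensuring that such an extreme-ray rescaling can be performed independently of the (arbitrary) bit-complexity of $(\vec{x^\star},\vec{y^\star})$, which follows from the decomposition theorem for polyhedra applied to the dual cone and is the point requiring the most care in a full write-up.
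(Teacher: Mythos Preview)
Your proposal is correct and follows essentially the same route as the paper: both invoke the LP-duality (Onaga--Iri/Japanese theorem) characterization of multicommodity-flow feasibility to produce a separating inequality that is linear in $(\vec{x},\vec{y})$, and both reduce to integral $\vec{g}^\star$ for the bit-complexity bound. The one notable technical difference is how the dual certificate's bit complexity is controlled: you take an extreme ray of the homogeneous dual cone and rescale, whereas the paper imposes the box constraints $0\le z_j,\ell_e\le 1$, solves the resulting bounded LP, and takes an extreme-point optimum---this sidesteps exactly the pointedness/extreme-ray issue you flag as ``requiring the most care,'' but both arguments are valid. One minor remark: your appeal to Lemma~\ref{lem:bing} for validity of the inequality is unnecessary, since \MFNLP directly requires $\MFN(\vec{g}^\star,\vec{x},\vec{y})$ to be feasible for the specific $\vec{g}^\star$ at hand.
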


\subsection{\MFNLP is a Relaxation of the Capacitated Facility Location Problem}

We show in this subsection that \MFNLP is indeed a relaxation.

\begin{lemma}
\MFNLP is a relaxation of the capacitated facility location problem.
\end{lemma}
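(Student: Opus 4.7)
The plan is to take an arbitrary integral feasible CFL solution given by an open set $S \subseteq \cF$ and an assignment $\sigma : \cD \to S$ respecting capacities, and to exhibit a feasible solution to \MFNLP of the same cost by the obvious rule $y_i := \mathbf{1}[i \in S]$ and $x_{ij} := \mathbf{1}[\sigma(j) = i]$. The objective value matches the CFL cost by definition, so everything reduces to verifying that $\MFN(\vec g, \vec x, \vec y)$ is feasible for every valid partial assignment $\vec g$. By Lemma~\ref{lem:bing}, it suffices to handle a fixed integral valid $\hat{\vec g}$.

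Next I would encode the situation combinatorially. Form the bipartite multigraph $H$ whose left side is $\cD$ and whose right side contains $U_i$ ``slots'' for every facility $i \in \cF$. The assignment $\sigma$ then corresponds to a matching $M$ saturating every client, and $\hat{\vec g}$ corresponds to a matching $G$ saturating exactly the clients in $D_0 := \{j : d_j = 0\}$ and using $|T_i| := \sum_j \hat g_{ij} \leq U_i$ slots of each facility $i$. Every client in $D_0$ has degree $2$ in $M \cup G$, every client in $D_1 := \cD \setminus D_0$ has degree $1$, and every slot has degree at most $2$, so $M \cup G$ decomposes into vertex-disjoint paths and cycles. The main structural observation, and the place I expect the real work to be, is that a path rooted at any $j \in D_1$ must alternate $M, G, M, G, \dots$ and terminate at a slot of some facility $i^\star$ with $G$-degree zero: intermediate $D_0$-clients always have an $M$-edge to continue along, so the path cannot end at a client. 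Such a terminal slot is, by construction, an unsaturated slot of $i^\star$ under $\hat{\vec g}$, and since it has $M$-degree $1$ the facility $i^\star$ must lie in $S$ (hence $y_{i^\star} = 1$).

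Given this decomposition, I would construct the multi-commodity flow one path at a time: for each $j \in D_1$, route one unit of commodity $j$ from $j^s$ along the alternating arc sequence $(j^s,\sigma(j)), (\sigma(j), j_1^s), (j_1^s, \sigma(j_1)), (\sigma(j_1), j_2^s), \dots$ corresponding to the path $j \to s_1 \to c_1 \to s_2 \to \dots \to s_r$ in $M \cup G$, then exit via $(i^\star, (i^\star)')$ and $((i^\star)', j^t)$; clients in $D_0$ need no flow since $d_j=0$. The remaining capacity bookkeeping, though the most delicate part of the argument, follows from vertex-disjointness of the $M \cup G$-paths: each arc $(j^s,i)$ or $(i,j^s)$ of capacity $x_{ij}, \hat g_{ij} \in \{0,1\}$ is traversed by at most one path; the arc $(i^\star, (i^\star)')$ of capacity $y_{i^\star}(U_{i^\star}-|T_{i^\star}|) = U_{i^\star}-|T_{i^\star}|$ absorbs one unit per terminating path and at most $U_{i^\star}-|T_{i^\star}|$ paths can terminate at $i^\star$, one per unsaturated slot; and each $((i^\star)', j^t)$ carries only commodity $j$ up to its capacity $y_{i^\star} d_j = 1$. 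Together these checks establish feasibility of $\MFN(\hat{\vec g}, \vec x, \vec y)$ and conclude the proof.
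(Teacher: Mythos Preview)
Your proposal is correct and follows essentially the same approach as the paper: both arguments create slot copies of facilities, view the integral solution $\sigma$ and the integral partial assignment $\hat{\vec g}$ as two matchings on the resulting bipartite graph, decompose their union into vertex-disjoint alternating paths and cycles, and route each positive-demand client's unit of commodity along its path to the terminal (open, $\hat{\vec g}$-unsaturated) slot. The only cosmetic difference is that the paper creates $y^\star_i U_i$ slots (so closed facilities get none) and restricts the $\hat{\vec g}$-matching to open facilities, whereas you create $U_i$ slots for all $i$; this is harmless since every path rooted at a $D_1$-client stays within open-facility slots anyway, exactly as you argue.
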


\begin{proof}
  Consider an arbitrary integral solution $(\vec{x}^{\star}, \vec{y}^{\star})$ to the facility location
  problem. By Lemma~\ref{lem:bing} we only need to verify that $\MFN(\vec{ g}, \vec{x}^{\star},
  \vec{y}^{\star})$ is feasible for each valid integral assignment $\vec{ g}$. Let
  $\vec{\hat g}$ be an arbitrary valid integral assignment.

Now we consider a directed bipartite graph $G=(V,A)$, of which one side of the vertex set is $\cD$, and on the other side, each facility $i\in\cF$ appears in $y^{\star}_i\cdot U_i$ duplicate copies. Consider the following two matchings $M_1$ and $M_2$ on these vertices.\begin{itemize}
\item For each client $j$, $M_1$ has an edge between $j$ and (a copy of) $i$ for which $x^{\star}_{ij}=1$. There will always be a copy of $i$ since $y_i^{\star}\geq x_{ij}^{\star}=1$. We will also ensure that a single copy of a facility does not have more than one incident edge: this is possible due to the capacity constraints on $(x^{\star},y^{\star})$.
\item For each $(i,j)$ such that $\hat g_{ij}=1$ and $y^{\star}_i=1$, $M_2$ has an edge between a copy of facility $i$ and client $j$. Note that no client will have more than one incident edge since $\sum_{i\in\cF}\hat g_{ij}\leq 1$. We will also ensure that a single copy of a facility does not have more than one incident edge. This is possible since $\sum_{j\in\cD}\hat g_{ij}\leq U_i$.
\end{itemize}Now we orient every edge in $M_1$ from clients to facilities; edges in $M_2$ are oriented in the opposite direction. $A$ is defined as the union of these two directed matchings. Since both $M_1$ and $M_2$ are matchings, every vertex in $G$ has indegree of at most one and outdegree of at most one. Hence, we can decompose $A$ into a set of maximal paths and cycles. Moreover, since $M_1$ matches every client, none of these maximal paths will end at a client. Reinterpret these paths as paths on $\cD$ and $\cF$, instead of on the duplicate copies of facilities. Let $\cP$ denote the set of these (nonempty) paths.

We will now construct a feasible multi-commodity flow on $\MFN(\vec{\hat g},\vec x^{\star},\vec y^{\star})$. We consider each $P\in\cP$. If $P$ starts from a facility, ignore it; otherwise let $j$ be the starting point of $P$ and $i$ the ending point: $P=( j,i_1,j_2,i_2, \ldots,j_k, i)$. If $d_j=0$, we ignore $P$; otherwise, we push one unit of flow of commodity $j$ along $P$, staying within the shaded area of Figure~\ref{fig:MFN}: i.e., the flow is pushed along $( j^s,i_1,j_2^s,i_2, \ldots,j_k^s, i)$. When we arrive at $i$, further push this flow along $( i,i',j^t) $, draining the flow at $j^t$: this is legal since the flow is of commodity $j$. We repeat this until we have considered all paths in $\cP$. We claim that this procedure yields a feasible multi-commodity flow.

First, note that each arc in $A$ maps to an edge of capacity 1 in $\MFN(\vec{\hat g},\vec
x^{\star},\vec y^{\star})$.
Since $\cP$ is a decomposition of
(a subset of) $A$, capacity constraints on $( j^s,i)$ and $( i,j^s)$ are
satisfied from the construction. Now consider the capacity of $( i,i')$. Each time we
encounter a path $P\in\cP$ that starts at some client and ends at $i$, one unit of additional flow
is sent over this arc. If $y^{\star}_i=0$, there will be no such path in $\cP$. If $y^{\star}_i=1$,
there are at most $U_i-\sum_{j\in\cD}\hat g_{ij}$ paths in $\cP$ ending at $i$, since $M_2$ matches
exactly $\sum_{j\in\cD}\hat g_{ij}$ copies of $i$ out of $U_i$ in total. This verifies that the
capacity constraint on $( i,i')$ is also satisfied. Finally, arc $( i',j^t)$
is used only when we process $P\in\cP$ that starts from $j$ and ends at $i$. This is true for at
most one path in $\cP$ since there is at most one path starting from each client (note that there
are no duplicate copies of clients in $G$); moreover, $P$ can end at $i$ only if $y^{\star}_i=1$
(otherwise, there are no copies of $i$ in $G$). The capacity constraint on $( i',j^t)$
is therefore also satisfied.

Demand constraints are also satisfied: suppose $d_j=1$ for some $j\in\cD$. This means $\vec{\hat g}$ does not assign $j$ to any facility, and therefore $M_2$ does not match $j$. Hence $j$ has indegree of zero and outdegree of one in $G$, and thus $\cP$ contains exactly one path that starts from $j$.
\end{proof}

Intuitively, the above proof can also be interpreted as follows: given an arbitrary partial assignment and integral solution, consider the shaded area of Figure~\ref{fig:MFN}. By saturating every arc in this area, we obtain a feasible single-commodity flow where every client generates a unit flow either at its original position or at the facility it is assigned to by $\vec g$. While this flow satisifies every commodity-oblivious capacity, it may not be immediately clear why it also satisfies the commodity-specific capacities; here we can appeal to the integrality of $\vec y^\star$, because in this case every facility with nonzero commodity-oblivious capacity will automatically have the full commodity-specific capacity of $1$. Such an argument, however, would not extend to a fractional solution (to the standard LP for example), which illustrates the strength of our relaxation.

\subsection{Comparing \MFNLP to Standard LP and  Knapsack-Cover Inequalities}\label{sec:knapsack}
In order to facilitate our understanding of the new relaxation, we demonstrate how it relates to
other formulations for the capacitated facility location problem.

\paragraph{Standard LP.} We shall show that the constraint that $\MFN(\vec g,\vec x,\vec y)$ is feasible for $\vec g=\vec
0$ already is sufficient to see that our relaxation is no worse than the standard LP relaxation. The
standard LP relaxation uses the same variables $(\vec{x},\vec{y})$ as \MFNLP and is formulated as
follows:
\begin{center}
\boxed{%
  \addtolength{\linewidth}{-2\fboxsep}%
  \addtolength{\linewidth}{-2\fboxrule}%
        \begin{minipage}{10cm}%
       \begin{align}
&&&\textrm{minimize} \quad  \rlap{$\displaystyle\sum_{i\in \cI} o_i \cdot y_i + \sum_{i\in \cF, j\in \cD} c_{ij} \cdot x_{ij}$,}\nonumber\\[2mm]
       &&&     \textrm{subject to} & x_{ij}&\leq   y_i & \forall i\in \cF, j\in \cD;\label{e:standard:1}\\[1mm]
               &&& &\sum_{i\in \cF} x_{ij}&= 1 & \forall  j\in \cD;\label{e:standard:2}\\[1mm]
             &&&  & \sum_{j\in \cD} x_{ij}&\leq y_iU_i & \forall  i\in \cF;\label{e:standard:3}\\[1mm]
          &&& & 0 \leq \vec{x},&\vec{y}\leq 1
          \end{align}
        \end{minipage}%
}
\end{center}

Consider arbitrary $(\vec x^{\star},\vec y^{\star})$ that makes $\MFN(\vec 0,\vec
x^{\star},\vec y^{\star})$ feasible. Since $\vec g=\vec 0$, the support of $\MFN(\vec g,\vec
x^{\star},\vec y^{\star})$ is acyclic and the flow of each commodity can be decomposed into paths
with no cycles. In particular, every path for commodity $j\in\cD$ will be in the form of
$j^s-i-i'-j^t$ for some $i\in\cF$. Observe that this implies that the only commodity that has
nonzero flow on $( j^s,i)$ is $j$; let this flow be $\bar x_{ij}$. Now we claim that
$(\vec {\bar x},\vec y^{\star})$ is a feasible solution to the standard LP: \eqref{e:standard:2}
follows from $d_j=1$; \eqref{e:standard:1} follows from the capacity constraint on $(
i',j^t)$. Finally, $\sum_{j\in\cD}\bar x_{ij}$ equals the total (regardless the commodity)
incoming flow to $i$; this in turn is bounded from above by $y^{\star}_i U_i$ from the capacity
constraint on $( i,i')$. This shows that $(\vec {\bar x},\vec y^{\star})$ is feasible to
the standard assignment LP. Observe that $\vec x^{\star}$ dominates $\vec{\bar x}$ and therefore the
lower bound on the optimum given by \MFNLP is always no worse than the standard assignment LP.

\paragraph{Knapsack-cover inequalities.}
Consider a special case where the metric on $\cF$ and $\cD$ is constantly zero: i.e., every facility
and every client are ``on the same spot''. As there is no connection cost, the problem reduces to
simply selecting a set of facilities that as a whole has enough capacity while minimizing the total
cost: this is the minimum knapsack problem. Each facility $i$ corresponds to an item with weight $U_i$ and cost $o_i$; $|\cD|$
corresponds to the demand of the knapsack problem. Using the notation of the capacitated facility
location problem, the knapsack cover inequalities due to Carr et al.~\cite{CFLP2000} are written as follows:\begin{equation*} \sum_{i\in\cF\setminus A} \min
  (U_i,|\cD|-{\textstyle\sum_{i\in A}U_i})\cdot y_i\geq |\cD|-{\textstyle\sum_{i\in A}U_i},\quad
  \forall A\subset\cF\textrm{ such that }{\textstyle\sum_{i\in A}U_i}\leq |\cD| .\end{equation*}

Now each of these inequalities are implied by our relaxation. Let $S$ be a set of any
${\textstyle\sum_{i\in A}U_i}$ clients and $R:=\cD\setminus S$. Consider $g$ that fully assigns
every client in $S$ to the facilities in $A$, thereby saturating those facilities, and does not
assign any clients in $R$. Note that $d_j$ will be zero for every $j\in S$ and one for every $j\in
R$. Now we choose a feasible solution $(\vec{z},\vec{\ell})$ to \eqref{e:seplp:1}-\eqref{e:seplp:2} as follows: for
each facility $i\in A$, if $U_i> |\cD|-{\textstyle\sum_{i\in A}U_i}$, set $l_{(
  i',j^t)}:=1$ for all $j\in R$; if $U_i\leq |\cD|-{\textstyle\sum_{i\in A}U_i}$, set
$\ell_{( i,i')}:=1$. All other $\ell$'s are set to zero. $z_j:=1$ for all $j\in R$; $z_j:=0$
for all $j\in S$. Now \eqref{e:mfdual} implies the knapsack cover
inequalities.

\paragraph{Example.}

We give a simple integrality gap example for the standard LP and show how our linear program strengthens the linear program to \emph{cut off} the fractional solution.
Consider the following instance of the capacitated facility location problem. Here we have two facilities $i_1$ and $i_2$ each with capacity $n$ and opening costs are $0$ and $1$ respectively. There are $n+1$ clients $j_1,\ldots,j_{n+1}$. The distance between any two points, either facility or client, is zero. Thus all facilities and clients sit at the same point. Consider the following fractional solution $(\vx^*,\vy^*)$ where we have $y^{\star}_{i_1}=1$ and $y^{\star}_{i_2}=\frac{1}{n}$,  $x^{\star}_{i_1j_r}=\frac{n}{n+1}$ for each $1\leq r \leq n+1$ and $x^{\star}_{i_2j_r}=\frac{1}{n+1}$ for each $1\leq r \leq n+1$. It is quite simple to verify that $(\vx^{\star},\vy^{\star})$ is a feasible solution to the standard LP and costs $\frac{1}{n+1}$ while the cost of the optimal solution is $1$ giving us an unbounded integrality gap for large $n$. We now show how this fractional solution can be \emph{cut off} using our stronger LP. We also note that, for this instance, the knapsack cover inequalities are enough to obtain a good approximation.

Consider the partial assignment $\vg^{\star}$ defined as follows. We let $g^{\star}_{i_1j_r}=1$ for
each $1\leq r\leq n$. Thus we have assigned $n$ clients to facility $i_1$ and saturating it. The
only facility that can serve the unassigned client is $i_2$. Consider the flow network defined by
this instance. The capacity of arc $(i_2',j_{n+1}^t)$ is $y_{i_2}$ and demand of client $j_{n+1}$ is
$1-\sum_{i\in \cF} g^{\star}_{ij_{n+1}}=1$. Since all flow reaching $j^t$ must go on this arc, we
must have $y_{i_2}\geq 1$. Thus the fractional solution must cost at least one.


\section{Approximation Algorithm}\label{sec:algorithm}

\newcommand{\brx}{\bar{\vx}}
\newcommand{\bry}{\bar{\vy}}

In this section, we describe our approximation algorithm and prove Theorem~\ref{thm:main}:
\footnote{The cost function includes two components, facility opening costs and connection costs. Optimizing the parameters to obtain the same worst case performance for both components will lead to significant improvements in the constant obtained above. But such methods will not lead to improvement over $5$-approximation due to local search~\cite{BGG12}.}
\begin{reptheorem}{thm:main}[restated]
There exists a 288-approximation algorithm for the capacitated facility location problem. The cost of its output is no more than 288 times the optimal cost of \MFNLP.
\end{reptheorem}

 The
algorithm is based on rounding a given fractional ``solution'' to \MFNLP. However, as we do not know how to
solve \MFNLP exactly, we give a \emph{relaxed} separation oracle
that either outputs a violated
inequality or returns an integral solution obtained from the fractional solution by increasing the
cost only by a constant factor. A similar approach has previously been used by Carr et al.~\cite{CFLP2000} and later by Levi et al.~\cite{LeviLS07}.

\paragraph{Algorithm overview.} Our algorithm first guesses the cost of the optimal solution to \MFNLP using a binary
search\footnote{We remark that the relaxed separation oracle can also  simply be used with the standard optimization version of the ellipsoid method, which would not involve a binary search.}.
For each guess, say $\gamma$, we run an ellipsoid algorithm. At each step of the ellipsoid
algorithm, we obtain a fractional solution $(\vx^{\star},\vy^{\star})$, possibly infeasible. We then
first verify the boundary constraints $\vec{0} \leq \vx^\star, \vy^\star \leq \vec{1}$ and the objective
constraint $c(\vx^\star, \vy^\star) \leq \gamma$. If $(\vx^\star, \vy^\star)$ violates one of these
inequalities, we output it and continue to the next iteration of the ellipsoid algorithm. Otherwise,
we either construct  a so-called semi-integral solution (defined below) or output a violated
inequality showing  infeasibility of the flow network $\MFN(\vec{g}^\star, \vx^\star, \vy^\star)$ for some $\vec{g}^\star$. In the final step, our algorithm rounds this semi-integral
solution into an integral solution by increasing the cost by a constant factor.

We remark that the main step of our algorithm exploiting the strength of \MFNLP is
the step for finding a semi-integral solution or outputting a violated inequality (summarized in
Theorem~\ref{thm:mainalg2}). An interesting detail is that our rounding algorithm only needs that
the multi-commodity flow network is feasible for a \emph{single} $\vec{g}^\star$ in order to output a
semi-integral solution.  Once we have a semi-integral solution, the rounding is fairly
straightforward using previous algorithms for soft-capacitated versions. We now first define
semi-integral solutions and describe the rounding to integral solutions in
Section~\ref{sec:defsemi}. We then continue with the proof of Theorem~\ref{thm:mainalg2} which is the
main technical contribution of this section.

\subsection{Semi-Integral Solutions: Definition and Rounding}
\label{sec:defsemi}
The idea of semi-integral solutions is that they partition the
facilities into two sets: the set $I$ of integrally opened facilities  and the set $S$ of
facilities of small opening. Clients may be fractionally assigned to both facilities in $I$ and $S$.
However, there is an important constraint regarding the assignment to facilities in
$S$ (condition \eqref{cond:semi3} in the definition below).  For each client $j$, it says that at
most a $y_i$ fraction of $j$'s total assignment to facilities in $S$ can be assigned to $i\in S$.
This will allow us to round semi-integral solutions by using techniques developed for the standard
LP relaxation.

\begin{definition}\label{def:semi}
A solution $(\hat \vx,\hat \vy)$ is \emph{semi-integral} if it satisfies the following conditions.
\begin{enumerate}[(i)]
\item $(\hat\vx,\hat\vy)$ satisfies the assignment constraints, i.e., for each $j\in \cD$, $\sum_{i\in \cF} \hat x_{ij}=1$ and for each $i\in \cF$, $\sum_{j\in \cD}\hat x_{ij}\leq \hat y_i U_i.$ \label{cond:semi1}
\item For each $i\in \cF$, $\hat y_i=1$ or $\hat y_i\leq \frac12$. Let $I=\{i:\hat y_i=1\}$ and $S=\cF\setminus I$.\label{cond:semi2}
\item For each $j\in \cD$, let $\hat d_j=\sum_{i\in S} \hat x_{ij}$. Then we have $\hat x_{ij}\leq \hat y_i \hat d_j$ for each $i\in S$ and $j\in \cD$.\label{cond:semi3}
\end{enumerate}
\end{definition}

We now describe the  procedure for rounding the semi-integral solution to an integral
solution. All facilities in $I$, whose opening variables are set to one in the
semi-integral instance, are opened. Consider the residual instance where each client has a residual
demand $\hat d_j$, amount to which it is not assigned to facilities in $I$. This residual demand is
satisfied by facilities in $S$, each of which is open to a fraction of at most $\frac12$ by the
semi-integral solution. Conditions~\eqref{cond:semi1} and \eqref{cond:semi3} of the semi-integral
solution imply that the residual solution is a feasible solution to the standard LP for the residual
instance.
Since the opening variables are set to a small fraction in the residual instance, we can use
an approximation algorithm for the soft-capacitated facility location problem which rounds the
standard LP. An $(\alpha,\beta)$-approximation algorithm for the soft-capacitated facility location
problem returns a solution whose cost is no more than $\alpha$ times the cost of the optimal
fractional solution and violates the capacity of any open facility by a factor of at most $\beta$. We give
the algorithm our residual instance as input where we scale down the capacities by a factor of
$\beta$ but scale up the opening variables by the same
factor.
Observe that as long as each $\hat{y}_i\leq \frac{1}{\beta}$ for each facility $i\in S$, we obtain a
feasible solution to the standard LP.
Here we use the
$(18,2)$-bicriteria approximation algorithm due to Abrams et al.~\cite{AMMP2002} to complete our
rounding to an integral solution.

\begin{lemma}\label{lemma:main3}
 Given a semi-integral solution $(\hat{\vx},\hat{\vy})$, we can in polynomial time find  an integral solution $(\bar{\vx},\bar{\vy})$ whose cost is at most $36 c(\hat{\vx},\hat{\vy})$.
\end{lemma}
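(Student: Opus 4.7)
The plan is to reduce the rounding of a semi-integral solution to the soft-capacitated facility location problem on $S$ and invoke the $(18,2)$-bicriteria approximation of Abrams et al.~\cite{AMMP2002} as a black box. First, open every facility in $I$ integrally; by condition~(ii) this costs $\sum_{i\in I}o_i=\sum_{i\in I}o_i\hat y_i$, already bounded by the opening portion of $c(\hat{\vx},\hat{\vy})$.

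Next, construct a \emph{scaled residual} instance on the facility set $S$: each $i\in S$ has capacity $U_i/2$ and fractional opening $\tilde y_i:=2\hat y_i$ (which is at most $1$ by condition~(ii)), and each client $j$ has demand $\hat d_j$. The pair $(\hat{\vx}|_S,\tilde{\vy})$ is feasible for the standard LP of this scaled instance: condition~(i) gives $\sum_{j\in\cD}\hat x_{ij}\leq\hat y_i U_i=\tilde y_i\cdot(U_i/2)$, and condition~(iii) gives $\hat x_{ij}\leq\hat y_i\hat d_j\leq\tilde y_i\hat d_j$ for each $i\in S$ and $j\in\cD$. Running Abrams et al.~on the scaled instance produces an integral opening $T\subseteq S$ in which the scaled capacities are violated by at most a factor $2$, so that the total service at each $i\in T$ is at most $U_i$. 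Its cost is at most $18$ times the LP optimum of the scaled instance, contributing at most $18\sum_{i\in S}o_i\tilde y_i=36\sum_{i\in S}o_i\hat y_i$ in facility openings and at most $18\sum_{i\in S,j\in\cD}c_{ij}\hat x_{ij}$ in connections.

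With the open set $I\cup T$ fixed, I finally assign each client integrally to an open facility by solving the transportation LP on $I\cup T$ with the original capacities $U_i$. The fractional restriction $\hat{\vx}|_I$ together with the routing produced by Abrams et al.~on $T$ forms a feasible fractional solution whose connection cost is at most $\sum_{i\in I,j\in\cD}c_{ij}\hat x_{ij}+18\sum_{i\in S,j\in\cD}c_{ij}\hat x_{ij}$, and integrality of the transportation polytope yields an integral optimum of no greater cost. Combining with the opening bound gives a total of at most $36\sum_{i\in\cF}o_i\hat y_i+18\sum_{i\in\cF,j\in\cD}c_{ij}\hat x_{ij}\leq 36\cdot c(\hat{\vx},\hat{\vy})$. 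The main technical subtlety I anticipate is reconciling the analysis of Abrams et al., stated for unit-demand clients, with the fractional residual demands $\hat d_j$ above; since their algorithm is LP-based, this can be handled either by inspecting their argument directly or via a demand-scaling reduction.
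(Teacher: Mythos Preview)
Your proposal is correct and follows essentially the same route as the paper: split into $I$ and $S$, feed the halved-capacity/doubled-opening residual instance on $S$ to the $(18,2)$-bicriteria algorithm of Abrams et al., concatenate with the integral openings on $I$, and finish with a transportation/$b$-matching LP to obtain an integral assignment. The subtlety you flag about fractional demands $\hat d_j$ is handled in the paper by stating the result of Abrams et al.\ directly for the general-demand LP $\LPdemand$, so no separate demand-scaling reduction is needed.
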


We give the formal proof of Lemma~\ref{lemma:main3} in Appendix~\ref{sec:soft}.

\subsection{Finding a  Semi-Integral Solution or a Violated Inequality}\label{sec:partial}
We are now ready to describe and prove the main ingredient of our rounding algorithm.
\begin{theorem}\label{thm:mainalg2}
There is a polynomial time algorithm that, given $(\vx^{\star},\vy^{\star})$, either
\begin{itemize}
\item shows that $(\vx^{\star},\vy^{\star})$ is infeasible for $\MFNLP$ and returns a violating inequality, or
\item returns a solution $(\hat{\vx},\hat{\vy})$ such that $(\hat{\vx},\hat{\vy})$ is semi-integral and $c(\hat{\vx},\hat{\vy})\leq 8 c(\vx^{\star},\vy^{\star})$.
\end{itemize}
\end{theorem}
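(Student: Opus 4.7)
The plan is to construct a candidate semi-integral solution $(\hat\vx,\hat\vy)$ in three stages, invoking Lemma~\ref{lem:linprog} at the critical step. Crucially, the algorithm will only need to check feasibility of $\MFN(\vec g^\star,\vx^\star,\vy^\star)$ for a single carefully-chosen valid $\vec g^\star$, despite \MFNLP being defined over all valid $\vec g$. In Stage~1, I round the opening variables by thresholding at $\tfrac12$: set $I=\{i\in\cF:y^\star_i\ge\tfrac12\}$, $S=\cF\setminus I$, $\hat y_i=1$ for $i\in I$, and $\hat y_i=y^\star_i$ for $i\in S$. Condition~(\ref{cond:semi2}) of Definition~\ref{def:semi} holds immediately, and since $y^\star_i\ge\tfrac12$ on $I$, the facility opening cost is at most doubled.

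In Stage~2, I compute an integral valid partial assignment $\vec g^\star$ supported on $I\times\cD$ by solving a cost-constrained maximum $b$-matching on the complete bipartite graph between $\cD$ and $I$ with capacities $U_i$. The restriction of $\vx^\star$ (suitably scaled) to $I\times\cD$ is a fractional feasible point of bounded cost (because $y^\star_i\ge\tfrac12$ there), and the $b$-matching polytope is integral, so an integral $\vec g^\star$ with $\sum_{i,j} c_{ij}g^\star_{ij}\le O(1)\cdot\sum_{i,j} c_{ij}x^\star_{ij}$ can be produced, possibly via a Lagrangian or iterative-rounding argument to absorb the extra cost constraint. The structural property I will use in the next step is that whenever some client remains unassigned, $\vec g^\star$ saturates every facility in $I$---which is automatic for a maximum $b$-matching on the complete bipartite graph, since otherwise any unassigned client could be appended to a facility with slack.

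In Stage~3, invoke Lemma~\ref{lem:linprog} on $\MFN(\vec g^\star,\vx^\star,\vy^\star)$; if infeasible, return the violating inequality. Otherwise, let $A\subseteq\cD$ be the clients assigned by $\vec g^\star$, let $R=\cD\setminus A$, and let $\vec f$ be any feasible multi-commodity flow in the network. Set $\hat x_{ij}=g^\star_{ij}$ for $j\in A$, and $\hat x_{ij}$ equal to the commodity-$j$ flow on arc $(i,i')$ for $j\in R$. Condition~(\ref{cond:semi1}) then follows because $\sum_i\hat x_{ij}=1$ (for $j\in A$ from $\vec g^\star$ and for $j\in R$ from commodity-$j$ flow into $j^t$ equalling $d_j=1$), and because the capacity of $(i,i')$ yields $\sum_j\hat x_{ij}\le\sum_{j\in A}g^\star_{ij}+y^\star_i(U_i-\sum_{j\in A}g^\star_{ij})\le\hat y_iU_i$. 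For condition~(\ref{cond:semi3}), fix $i\in S$: the $j\in A$ case is trivial since $\hat x_{ij}=g^\star_{ij}=0$; for $j\in R$, the commodity-specific capacity $y^\star_id_j=\hat y_i$ on arc $(i',j^t)$ bounds $\hat x_{ij}\le\hat y_i$. Because Stage~2 saturates all of $I$ whenever $R\ne\emptyset$, the arcs $(i,i')$ for $i\in I$ have zero capacity, so commodity-$j$ flow must sink entirely in $S$; hence $\hat d_j=\sum_{i\in S}\hat x_{ij}=1$ and $\hat x_{ij}\le\hat y_i=\hat y_i\hat d_j$.

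The main obstacle will be Stage~2: producing an integral $\vec g^\star$ that simultaneously achieves validity, $I$-saturation whenever possible, and a constant-factor cost bound against $\sum_{i,j} c_{ij} x^\star_{ij}$. The remaining cost accounting is less delicate: the opening cost is at most doubled, the $\vec g^\star$-part of the connection cost is controlled by Stage~2, and the connection cost of $\hat\vx$ restricted to $R$ can be bounded by path-decomposing the flow $\vec f$ and invoking the triangle inequality---each flow path reroutes a client through $\vx^\star$- and $\vec g^\star$-arcs whose aggregate cost is $O(\sum_{i,j} c_{ij} x^\star_{ij})$. Summing these contributions yields the overall factor~$8$ stated in the theorem.
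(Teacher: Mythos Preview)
Your Stage~2 contains the central gap. You claim that a cost-constrained maximum $b$-matching on the complete bipartite graph $\cD\times I$ will, whenever some client remains unassigned, saturate every facility in $I$, ``since otherwise any unassigned client could be appended to a facility with slack.'' This is true for an \emph{unconstrained} maximum $b$-matching on a complete bipartite graph, but it fails as soon as you impose a cost budget: appending an unassigned client to a facility with slack may violate the budget. For a concrete obstruction, take a facility $i$ with $y^\star_i=1$, $o_i=0$, and large capacity $U_i$, but with $x^\star_{ij}\approx 0$ for all $j$ because the clients are far from $i$; saturating $i$ then costs arbitrarily more than $\sum_{i',j}c_{i'j}x^\star_{i'j}$. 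Your ``possibly via a Lagrangian or iterative-rounding argument'' does not dissolve this tension: bounded cost and full $I$-saturation are genuinely incompatible in general, so Stage~3's premise that all commodity-$j$ flow for $j\in R$ must sink in $S$ is unsupported.

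This tension is exactly what the paper's proof is engineered around. The paper controls cost by imposing \emph{edge} capacities $2x^\star_{ij}$ on the (fractional) $b$-matching, which gives $\vec g^\star\le 2\vx^\star$ for free but does \emph{not} saturate all of $I$. Instead it analyzes the residual graph of the matching, defines the reachable sets $I_H\subseteq I$ and $D_H\subseteq\cD$, and sets $\vec g^\star$ so that only $I_H$ is saturated while clients in $D_H$ receive \emph{no} assignment into $I\setminus I_H$. The missing structural step is then Lemma~\ref{lem:flow}: using this decomposition one shows that some feasible flow in $\MFN(\vec g^\star,\vx^\star,\vy')$ routes at least \emph{half} of every client's demand to sinks in $S$ (not all of it). That ``half'' is why the paper thresholds at $\tfrac14$ rather than your $\tfrac12$ and sets $\hat y_i=2y^\star_i$ on $S$ rather than $\hat y_i=y^\star_i$: after scaling the $S$-portion of the flow up by at most~$2$, condition~(\ref{cond:semi3}) goes through. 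Without a substitute for Lemma~\ref{lem:flow}, your construction of $(\hat\vx,\hat\vy)$ does not yield a semi-integral solution.
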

Note that the above theorem together with Lemma~\ref{lemma:main3}  implies Theorem~\ref{thm:main}
with the claimed approximation guarantee $8\cdot 36 = 288$.

We prove the theorem by describing the algorithm together with its properties. For an overview of the algorithm, see also Figure~\ref{fig:partial}.
The algorithm
consists of several steps.
First, we round up
the large opening variables of $\vy^\star$ to obtain modified opening variables $\vy'$. We define\[
y'_i:=\begin{cases} 
  1,&\textrm{if }y^\star_i\geq\frac{1}{4};\\
  y_i^{\star},&\textrm{otherwise}.
\end{cases}\] Let $I$ be the set of facilities that are fully open by $\vy'$: $I:=\{i\in\cF:
y'_i=1\}$. $S$ denotes the set of facilities that are open by a small fraction: $S:=\cF\setminus I$.

Given that our algorithm is going to open all the facilities in $I$, we will try to find a partial
assignment $\vec g^{\star}$ that
assigns as many clients to
these facilities as possible, while at the same time ensuring $\vec g^\star$ does not become too costly compared to $\vx^*$. To this end, we will
derive $\vec g^\star$ from a maximum $b$-matching in a bipartite graph on $\cF$ and $\cD$ whose edges are capacitated by $2\vx^*$. Let $G=(\cD,I,E)$ be the
complete bipartite graph whose bipartition is given by the clients  $\cD$ and the opened facilities $I$. An arc $(j,i)$ where $j\in
\cD$ and $i\in I$ is given a capacity of $2x^{\star}_{ij}$. This is to ensure that the cost of the matching is within a factor of 2 compared to the original assignment cost.  Every client $j$ has a capacity of one
and each facility $i\in I$ is given a capacity of $U_i$. Let $\vec z$ denote a maximum fractional
$b$-matching of $G$. Note that the matching may not be integral because of the capacities on the
edges. As $\vec{z}$ is a maximum fractional matching, its residual network  $H$ with arc set
$\{(j,i) : z_{ij} < 2x^\star_{ij}\} \cup \{(i,j):  z_{ij} > 0\}$ has useful properties that we describe below. In
particular, if we consider an \emph{unsaturated} client $j$, i.e., $\sum_{i\in I} z_{ij} < 1$, then $j$ has
no path in $H$ to a facility $i$ with remaining capacity, as that would contradict that $\vec{z}$ is
a maximum matching.

We shall now formalize these properties. Let us call a client $j\in\cD$ \emph{saturated} if $\sum_{i\in\cF}z_{ij}=1$, and \emph{unsaturated} otherwise; define
\begin{align*}
I_H & :=  \{i \in I: i \mbox{ is reachable in $H$ from some client $k$ that was unsaturated}\}; \\
D_H & := \{j \in \cD: j \mbox{ is reachable in $H$ from some client $k$ that was unsaturated}\}.
\end{align*}
Similar to clients, a facility $i\in I$ is called \emph{saturated} if $\sum_{j\in \cD} z_{ij}=U_i$ and \emph{unsaturated} otherwise. The following lemma summarizes three useful observations on $\vec{z}$ and $H$.
\begin{lemma}\label{lem:max-flow-properties}
The following must hold.
\begin{enumerate}[(a)]
\item \label{flow:prop1} Any facility $i\in I_H$ is saturated, i.e., $\sum_{j\in \cD} z_{ij} = U_i$.
\item \label{flow:prop2} If $i\in I\setminus I_H$ and $j\in D_H$, $z_{ij} = 2x^{\star}_{ij}$.
\item \label{flow:prop3} If $i\in I_H$ and $j\in \cD\setminus D_H$, $z_{ij}=0$.
\end{enumerate}
\end{lemma}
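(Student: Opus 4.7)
The plan is to prove all three parts by standard residual-graph/augmenting-path arguments familiar from maximum matching and max-flow theory. The key observation is that $\vec{z}$ is a \emph{maximum} fractional $b$-matching of the capacitated bipartite graph $G$, so by classical LP/flow duality the residual graph $H$ cannot contain any augmenting structure from an unsaturated client to an unsaturated facility. The sets $I_H$ and $D_H$ capture precisely the forward-reachability in $H$ from unsaturated clients, and each of (a)--(c) will be a direct consequence.

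For part~\eqref{flow:prop1}, I would assume for contradiction that some $i\in I_H$ is unsaturated, meaning $\sum_{j\in\cD} z_{ij}<U_i$. Since $i\in I_H$, there is a directed path in $H$ from some unsaturated client $k$ to $i$. Every forward arc on this path has $z_{ij}<2x^\star_{ij}$ and every backward arc has $z_{ij}>0$, so a sufficiently small augmentation along this path strictly increases $\sum_{j,i} z_{ij}$ (we push more flow out of the unsaturated $k$ and push more flow into the unsaturated $i$, with net cancellation elsewhere). This contradicts the maximality of $\vec{z}$.

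Parts~\eqref{flow:prop2} and~\eqref{flow:prop3} are one-step extensions of reachability. For \eqref{flow:prop2}, suppose $i\in I\setminus I_H$, $j\in D_H$, and $z_{ij}<2x^\star_{ij}$; then the forward arc $(j,i)$ lies in $H$ by definition of the residual graph, so $i$ is reachable in $H$ from any unsaturated client that reaches $j$, contradicting $i\notin I_H$. Symmetrically, for \eqref{flow:prop3}, if $i\in I_H$, $j\in\cD\setminus D_H$, and $z_{ij}>0$, then the backward arc $(i,j)$ lies in $H$, so concatenating with the path that witnesses $i\in I_H$ yields a path from an unsaturated client to $j$, contradicting $j\notin D_H$.

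None of the three steps appears to be a real obstacle; the mild technical point is just to verify that the definition of $H$ (forward arcs where $z_{ij}<2x^\star_{ij}$, backward arcs where $z_{ij}>0$) together with the edge capacities $2x^\star_{ij}$ and vertex capacities $1$ (clients) and $U_i$ (facilities in $I$) is exactly the right residual graph so that the augmenting-path argument for part~\eqref{flow:prop1} applies with no boundary issues. Once this is set up correctly, all three statements are immediate.
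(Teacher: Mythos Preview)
Your proposal is correct and follows essentially the same approach as the paper: part~\eqref{flow:prop1} by an augmenting-path contradiction to the maximality of $\vec{z}$, and parts~\eqref{flow:prop2} and~\eqref{flow:prop3} by one-step reachability extensions in $H$ (via the forward arc $(j,i)$ and the backward arc $(i,j)$, respectively). The paper's proof is slightly terser but the logic is identical.
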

\begin{proof}
We first prove \eqref{flow:prop1}. Suppose toward contradiction that there exists a facility $i\in I_H$ that is not saturated.  By the
definition of $I_H$ there exists a client $k$ that is unsaturated and $i$ is reachable from $k$
in $H$. Therefore there exists an alternating path from $k$ to $i$ which contradicts that the chosen
fractional matching $\vec{z}$ was maximum.

We now prove \eqref{flow:prop2}. By the definition of $D_H$, there exists an unsaturated client $k$ such that $j$ is reachable
from $k$ in $H$. Therefore, any facility $i$ such that $z_{ij} < 2x^{\star}_{ij}$ is also reachable from
$k$ and therefore part of $I_H$.
The proof of \eqref{flow:prop3} follows from the fact that $(i,j)\notin H$ since $i$ is reachable from an unsaturated client and $j$ is not. Therefore, $z_{ij}=0$.
\end{proof}

Now the valid partial assignment $\vec{g}^\star$ is constructed as follows:
\begin{align}
\label{eq:defg}
g^{\star}_{ij} =
\begin{cases}
z_{ij}&  \mbox{if } i \in I_H\\
z_{ij}& \mbox{if } i\in I\setminus I_H, j\in \cD\setminus D_H\\
0& \mbox{if }i\in I\setminus I_H, j\in  D_H\\
0& \mbox{if }i\in S.
\end{cases}
\end{align}
Note that $\vec{g}^\star$  is defined in terms of $\vec{z}$. This will allow us to analyze the flow
network using the properties of $\vec z$ described in Lemma~\ref{lem:max-flow-properties}.

Once we have this partial assignment, the algorithm verifies if
$\MFN(\vg^{\star},\vx^{\star},\vy^{\star})$ is feasible.  If not, we
invoke Lemma~\ref{lem:linprog} to find a violated inequality and Theorem~\ref{thm:mainalg2} holds.
Otherwise, the algorithm proceeds to construct a semi-integral solution using this partial assignment. For the rest of this section, we will assume that $\MFN(\vec g^{\star},\vec x^{\star},\vec y^{\star})$ is
feasible. Note that the feasibility of $\MFN(\vec g^{\star},\vec x^{\star},\vec y^{\star})$
guarantees the feasibility of $\MFN(\vec g^{\star},\vec x^{\star},\vec y')$ since $\vec y'\geq \vec
y^{\star}$.
\begin{claim}\label{lem:increasey}
If $\MFN(\vec g^{\star},\vec x^{\star},\vec y^{\star})$ is feasible and $\vec y'\geq \vec y^{\star}$, $\MFN(\vec g^{\star},\vec x^{\star},\vec y')$ is feasible.
\end{claim}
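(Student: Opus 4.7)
The plan is to argue the claim by a direct monotonicity observation on the multi-commodity flow network: any feasible flow for $\MFN(\vec g^{\star},\vec x^{\star},\vec y^{\star})$ is still a feasible flow for $\MFN(\vec g^{\star},\vec x^{\star},\vec y')$, because raising $\vec y$ only weakens capacity constraints while leaving demands and the remaining capacities unchanged.

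More concretely, I would inspect the definition of $\MFN$ arc by arc. The demand of commodity $j$ is $d_j=1-\sum_{i\in\cF}g^{\star}_{ij}$, which depends only on $\vec g^{\star}$ and is therefore identical in the two networks. The arcs $(j^s,i)$ and $(i,j^s)$ have capacities $x^{\star}_{ij}$ and $g^{\star}_{ij}$, which do not involve $\vec y$ and are thus also identical. Only the arcs $(i,i')$ and $(i',j^t)$ carry $\vec y$-dependent capacities, namely $y_i(U_i-\sum_{j\in\cD}g^{\star}_{ij})$ and $y_i d_j$ respectively. Since $U_i-\sum_{j\in\cD}g^{\star}_{ij}\geq 0$ (as $\vec g^{\star}$ is valid) and $d_j\geq 0$, both of these capacities are nondecreasing in $y_i$. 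Hence $\vec y'\geq \vec y^{\star}$ implies that each such capacity in $\MFN(\vec g^{\star},\vec x^{\star},\vec y')$ is at least the corresponding capacity in $\MFN(\vec g^{\star},\vec x^{\star},\vec y^{\star})$.

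From here the conclusion is immediate: take any feasible multi-commodity flow $\vec f$ for $\MFN(\vec g^{\star},\vec x^{\star},\vec y^{\star})$, which exists by hypothesis, and view it as a flow in $\MFN(\vec g^{\star},\vec x^{\star},\vec y')$. It routes the same demand $d_j$ for each commodity $j$, and on every arc its flow is bounded by the corresponding (possibly larger) capacity in the new network. Therefore $\vec f$ is feasible for $\MFN(\vec g^{\star},\vec x^{\star},\vec y')$ as well. There is no real obstacle here; the only thing to be careful about is verifying that $U_i-\sum_{j\in\cD}g^{\star}_{ij}\geq 0$ (so that raising $y_i$ indeed enlarges the capacity rather than flipping its sign), which follows from validity of $\vec g^{\star}$.
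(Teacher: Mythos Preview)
Your proposal is correct and takes essentially the same approach as the paper: both argue that the same feasible flow works because the arc capacities of $\MFN(\vec g^{\star},\vec x^{\star},\cdot)$ are nondecreasing in $\vec y$ while the demands depend only on $\vec g^{\star}$. Your arc-by-arc inspection and explicit check that $U_i-\sum_{j}g^{\star}_{ij}\geq 0$ simply spell out details the paper leaves implicit.
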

\begin{proof}
Consider a feasible flow for $\MFN(\vec g^{\star},\vec x^{\star},\vec y^{\star})$. Observe that it is feasible for $\MFN(\vec g^{\star},\vec x^{\star},\vec y')$ as well, since the arc capacities of $\MFN(\vec g^{\star},\vec x^{\star},\vec y)$ is nondecreasing in $\vec y$ while the demands remain the same since they depend on $\vg^{\star}$.
\end{proof}

We have now made our choice of $\vg^{\star}$ that satisfies the following three key properties which help us round
$(\vx^{\star},\vy^{\star})$:\begin{enumerate}
\item $\vec g^{\star}\leq \vec z\leq 2\vec x^{\star}$ and therefore $c(\vec{g}^\star)\leq 2c(\vec{x}^{\star})$;
\item $\vec g^{\star}$ assigns clients only to the fully open facilities, i.e., facilities in $I$;
\item $\vec g^{\star}$ satisfies the property formalized by Lemma~\ref{lem:flow}. (Note that Lemma~\ref{lem:flow} is proven for our carefully constructed partial assignment. It does not hold in general for arbitrary partial assignments.)
\end{enumerate}
Let $\vec{f}$ denote the flow certifying the feasibility of
$\MFN(\vg^{\star},\vx^{\star},{\vy}')$. We decompose $\vec{f}$ into flow paths where we let
$\cP_{ij}$ denote the set of flow paths carrying non-zero flow from $j^s$ to $j^t$ that use the arc
$(i,i')$. That is, these are the paths which take flow from $j$ and sink it at $i$. Let $f(P)$ denote
the flow on a path $P\in \cP_{ij}$. For each $i\in \cF$ and $j\in \cD$, we let $h(i,j)=\sum_{P\in
  \cP_{ij}}f(P)$ denote the amount of flow that client $j$ sinks at facility $i$. For any subset
$X\subseteq \cF$ and $j\in \cD$, let also $h(X,j):= \sum_{i\in X} h(i,j)$, i.e., the total amount of
flow that client $j$ sinks at facilities in $X$.

\begin{lemma}\label{lem:flow}
There exists a feasible flow to the multi-commodity flow problem $\MFN(\vg^{\star},\vx^{\star},\vy')$ such that each client $j \in \cD$ sends at least half its demand to facilities in $S$, i.e., $h(S,j)\geq \frac{d_j}{2}=\frac{1}{2}(1-\sum_{i\in \cF} g^{\star}_{ij})$.
\end{lemma}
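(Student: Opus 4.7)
The plan is to explicitly construct a feasible flow $\vec{f}$ for $\MFN(\vg^{\star},\vx^{\star},\vy')$ satisfying $h(S,j) \geq d_j/2$ via a two-stage routing scheme that leverages the structural properties of $\vg^{\star}$ derived from the maximum matching $\vec{z}$. First I will dispose of $j \notin D_H$: such a client is saturated in $\vec{z}$, and Lemma~\ref{lem:max-flow-properties}(c) combined with the definition of $\vg^{\star}$ in \eqref{eq:defg} yields $g^{\star}_{ij} = z_{ij}$ for every $i \in \cF$, so $\sum_i g^{\star}_{ij} = 1$ and $d_j = 0$. The interesting case is therefore $j \in D_H$.

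In Stage~1, for each $j \in D_H$ and each $i \in I \setminus I_H$, I push exactly $x^{\star}_{ij}$ units of commodity $j$ along the direct path $j^s \to i \to i' \to j^t$. Feasibility of Stage~1 rests on Lemma~\ref{lem:max-flow-properties}(b), which gives $x^{\star}_{ij} = z_{ij}/2$ for these pairs, together with the matching bound $\sum_{i'} z_{i'j} \leq 1$: I get $\sum_{i \in I \setminus I_H} z_{ij} \leq 1 - \sum_{i' \in I_H} z_{i'j} = d_j$, so each individual $z_{ij} \leq d_j$ (respecting the commodity-specific capacity $y'_i d_j = d_j$ on $(i',j^t)$) and $\sum_{i \in I \setminus I_H} x^{\star}_{ij} \leq d_j/2$ (so Stage~1 consumes at most half of commodity $j$'s demand). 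Dually, $\sum_{j \in D_H} x^{\star}_{ij} = \sum_{j \in D_H} z_{ij}/2 \leq (U_i - \sum_{k \notin D_H} z_{ik})/2$, using at most half of the capacity of $(i,i')$.

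Stage~2 must then route the remaining $r_j := d_j - \sum_{i \in I \setminus I_H} x^{\star}_{ij} \geq d_j/2$ units of commodity $j$ in such a way that all of it sinks in $S$, within the residual capacities left by Stage~1. Here the plan is to start from any feasible flow $\vec{f}^{\circ}$ for $\MFN(\vg^{\star},\vx^{\star},\vy')$, which exists by Claim~\ref{lem:increasey}, cancel it against Stage~1 on the direct arcs, and then perform augmenting-path rerouting within each commodity to transfer the flow that currently sinks at $I \setminus I_H$ over to $S$-sinks. The crucial structural lever is that $g^{\star}_{ij} = 0$ for every $i \in I \setminus I_H$ and $j \in D_H$, so the arcs $(i,j^s)$ have zero capacity for such pairs; consequently commodity $j$'s flow cannot be ``recycled'' back through $j^s$ via $I \setminus I_H$, which limits how it can concentrate at $I \setminus I_H$-sinks and opens augmenting space toward $S$. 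I expect the main obstacle to be carrying out this rerouting \emph{jointly} across all commodities, since the capacities on $(j^s,i)$ and $(i,i')$ are shared; this will likely require a careful global accounting---plausibly by exhibiting a single-commodity max-flow on the shaded subgraph of Figure~\ref{fig:MFN} that certifies the residual problem is feasible, invoking the $b$-matching integrality idea already used in Lemma~\ref{lem:bing}.
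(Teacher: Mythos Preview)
Your Stage~1 is correct and cleanly argued; the inequality $\sum_{i\in I\setminus I_H} x^{\star}_{ij}\le d_j/2$ is exactly the paper's~\eqref{e:dhhalf}, and your capacity checks for $(i,i')$ and $(i',j^t)$ are right. The difficulty, however, is entirely in Stage~2, and there you only have a plan, not a proof.

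The concrete problems with the Stage~2 sketch are these. First, ``cancel $\vec f^{\circ}$ against Stage~1 on the direct arcs'' is ill-defined: $\vec f^{\circ}$ may carry strictly less than $x^{\star}_{ij}$ on $(j^s,i)$ for $i\in I\setminus I_H$, so the difference is not a nonnegative flow. Second, ``augmenting-path rerouting within each commodity'' cannot by itself move sinks from $I\setminus I_H$ to $S$ while respecting the \emph{shared} capacities on $(j^s,i)$ and $(i,i')$; you acknowledge this yourself. Third, a single-commodity max-flow reduction on the shaded subgraph discards the commodity-specific caps $y'_i d_j$ on $(i',j^t)$, which are precisely what make the problem multi-commodity and what you need to certify. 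Finally, the $b$-matching integrality invoked from Lemma~\ref{lem:bing} concerns decomposing $\vg$ into integral assignments; it says nothing about feasibility of a residual multi-commodity flow. In short, you have not established that your residual Stage~2 instance is feasible, and none of the hinted tools closes that gap.

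The paper's proof avoids the two-stage decomposition altogether. It starts from an arbitrary feasible flow (first normalized via Claim~\ref{cl:flow1} so that no path visits $k^s$ with $k\notin D_H$), sets $\alpha=\min_{j:d_j>0} h(S,j)/d_j$ and $J$ the set of minimizers, and argues that if $\alpha<\tfrac12$ then one can strictly improve the pair $(-\alpha,|J|)$ lexicographically. The improvement step is a local \emph{path-suffix exchange}: because $\sum_{j\in J} h(I\setminus I_H,j)=(1-\alpha)d(J)>\tfrac12 d(J)\ge \sum_{i\in I\setminus I_H,\,j\in J} x^{\star}_{ij}$ (using~\eqref{e:dhhalf}), some flow path $P$ from a client $j\in J$ must enter $I\setminus I_H$ through an intermediate node $k^s$ with $k\notin J$; since $k$ sends strictly more than an $\alpha$-fraction to $S$, there is a path $Q$ from $k^s$ to an $S$-sink with slack on the relevant $(i',j^t)$ arc, and swapping the suffixes of $P$ and $Q$ by $\epsilon$ pushes $j$'s $S$-fraction above $\alpha$ without dropping $k$'s below it. This exchange argument is the missing idea in your proposal; it handles the cross-commodity interaction directly rather than trying to separate it into independent stages.
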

\noindent The proof of this lemma can be found in Section~\ref{sec:niceflow}.

Observe that the flow satisfying the conditions in Lemma~\ref{lem:flow} can be obtained in polynomial time by adding
additional linear constraints to the multi-commodity flow linear program for
$\MFN(\vg^{\star},\vx^{\star},{\vy}')$.  Let $\vec{f}$ denote such a flow. The algorithm now
proceeds by using this flow to define a semi-integral solution $(\hat \vx, \hat \vy)$.
Lemma~\ref{lem:flow} guarantees $h(S,j) \geq d_j/2$; hence we define the semi-integral solution by scaling up this assignment
by a factor of at most $2$. This ensures that  each client assigns all its  demand $d_j$ to $S$ and
that it is a semi-integral solution. Formally, we construct the semi-integral solution
$(\hat{\vx},\hat{\vy})$ as follows:

$$\hat{y}_i =
\begin{cases}
1,&  \mbox{if } i \in I;\\
2y^{\star}_i,& \mbox{if }i\in S;
\end{cases}
 ~~~~ \qquad \qquad \qquad~~~~
\hat{x}_{ij} =
\begin{cases}
g^{\star}_{ij},&  \mbox{if } i \in I, j\in \cD;\\
d_j\frac{ h(i,j)}{h(S,j)} ,&\mbox{if } i\in S, j\in \cD;
\end{cases}
$$
where we define $\frac{h(i,j)}{h(S,j)}$ to be $0$ if $h(S,j) = 0$.
For $i\in S$ and $j\in \cD$, we have $\hat{x}_{ij} =h(i,j) \cdot \frac{d_j }{h(S,j)} \leq 2 h(i,j)$ from
Lemma~\ref{lem:flow}. 
This allows us to bound the total cost of solution $(\hat{\vx},\hat{\vy})$.

\begin{lemma}\label{lem:cost}
The solution $(\hat{\vx},\hat{\vy})$ is semi-integral and $c(\hat{\vx},\hat{\vy})\leq 8 c(\vx^{\star},\vy^{\star})$.
\end{lemma}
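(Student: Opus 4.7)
The plan is to verify the three conditions of Definition~\ref{def:semi} for $(\hat{\vx},\hat{\vy})$ and then bound its facility opening cost and connection cost separately in terms of the corresponding pieces of $c(\vx^{\star},\vy^{\star})$. In both parts the leverage comes from the flow $\vec{f}$ guaranteed by Lemma~\ref{lem:flow}: its arc-capacity constraints supply the inequalities required for semi-integrality, while its path decomposition together with the metric triangle inequality yields the connection-cost bound.

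Condition~(ii) follows from the threshold $1/4$ used to define $\vec{y}'$: $\hat y_i = 1$ on $I$ and $\hat y_i = 2y^{\star}_i < 1/2$ on $S$. For condition~(i), the fact that $g^{\star}_{ij}=0$ for $i\in S$ reduces $\sum_{i\in\cF}\hat x_{ij}=1$ to $\sum_{i\in\cF}g^{\star}_{ij}+d_j=1$, which is exactly the definition of $d_j$; the facility-side capacity on $I$ is the validity of $\vg^{\star}$ (inherited from the $b$-matching $\vec{z}$), and on $S$ it follows from $\sum_j h(i,j) \leq y'_i U_i = y^{\star}_i U_i$ combined with $\hat x_{ij}\leq 2h(i,j)$ (which uses $h(S,j)\geq d_j/2$). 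Condition~(iii) is where Lemma~\ref{lem:flow} earns its keep: one first computes $\hat d_j = \sum_{i\in S}\hat x_{ij} = d_j$, and then for $i\in S$ the capacity $y^{\star}_i d_j$ on arc $(i',j^t)$ caps $h(i,j)$, giving
\begin{equation*}
\hat x_{ij} \;=\; d_j\frac{h(i,j)}{h(S,j)} \;\leq\; d_j\cdot\frac{y^{\star}_i d_j}{d_j/2} \;=\; 2y^{\star}_i d_j \;=\; \hat y_i\hat d_j,
\end{equation*}
with the degenerate case $d_j=0$ handled automatically since then $\hat x_{ij}=0$.

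The facility cost is immediate: $\hat y_i = 1 \leq 4y^{\star}_i$ on $I$ (since $y^{\star}_i\geq 1/4$) and $\hat y_i = 2y^{\star}_i$ on $S$, so $\sum_i o_i\hat y_i \leq 4\sum_i o_i y^{\star}_i$. For the connection cost, since $g^{\star}_{ij}\leq z_{ij}\leq 2x^{\star}_{ij}$, the $I$-contribution $\sum_{i\in I,j}c_{ij}\hat x_{ij}=\sum_{i\in I,j}c_{ij}g^{\star}_{ij}$ is at most $2\sum_{i\in I,j}c_{ij}x^{\star}_{ij}$. The $S$-contribution is the one genuinely technical step. Decompose $\vec{f}$ into flow paths so that each $P\in\cP_{ij}$ has the form $j^s\to i_1\to j_1^s\to i_2\to\cdots\to i_k=i\to i'\to j^t$; the triangle inequality applied along the metric points $j,i_1,j_1,\ldots,j_{k-1},i$ bounds $c_{ij}$ by the sum of $c_{ab}$ over all the client--facility arcs traversed by $P$. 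Weighting this bound by $f(P)$ and swapping summation orders yields
\begin{equation*}
\sum_{i,j} c_{ij} h(i,j) \;\leq\; \sum_{i,j} c_{ij}\,f_{(j^s,i)} + \sum_{i,j} c_{ij}\,f_{(i,j^s)} \;\leq\; \sum_{i,j} c_{ij} x^{\star}_{ij} + \sum_{i,j} c_{ij} g^{\star}_{ij} \;\leq\; 3\sum_{i,j} c_{ij} x^{\star}_{ij},
\end{equation*}
using $f_{(j^s,i)}\leq x^{\star}_{ij}$, $f_{(i,j^s)}\leq g^{\star}_{ij}$, and $\vg^{\star}\leq 2\vx^{\star}$. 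Hence $\sum_{i\in S,j}c_{ij}\hat x_{ij}\leq 2\sum_{i,j}c_{ij}h(i,j)\leq 6\sum_{i,j} c_{ij}x^{\star}_{ij}$, and the total connection cost is at most $8\sum_{i,j} c_{ij}x^{\star}_{ij}$. Combined with the facility bound, $c(\hat{\vx},\hat{\vy})\leq 8c(\vx^{\star},\vy^{\star})$, as claimed. The only step that requires care is the path-decomposition bookkeeping; once each arc's metric length is paired with its flow capacity, the constant $8$ drops out.
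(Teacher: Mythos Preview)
Your proof is correct and follows essentially the same route as the paper: verify the three semi-integrality conditions using the arc-capacity bounds of the flow $\vec f$ (with Lemma~\ref{lem:flow} supplying the factor~$2$), bound the opening cost by $4$ via the threshold $1/4$, bound the $I$-connection cost by $2$ via $\vg^\star\leq 2\vx^\star$, and bound the $S$-connection cost by $6$ using the path decomposition together with the triangle inequality and the arc capacities $x^\star_{ij}$ and $g^\star_{ij}$. The bookkeeping matches the paper's argument in Section~\ref{sec:rounding} essentially line for line.
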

The above lemma finishes the proof of Theorem~\ref{thm:mainalg2}. Its proof is
fairly straightforward calculations and can be found in Section~\ref{sec:rounding}.


\subsubsection{Proof of Lemma~\ref{lem:flow}: Existence of a Nice Multi-Commodity Flow}\label{sec:niceflow}
In this section, we prove Lemma~\ref{lem:flow}, i.e., we show there is  always a flow $\vec{f}$ as a
solution to $\MFN(\vg^{\star},\vx^{\star},\vy')$ satisfying the conditions of the lemma.

By definition, any client $j\in\cD\setminus D_H$ is saturated. Moreover,
$\sum_{i\in\cF}g^{\star}_{ij}=\sum_{i\in I}z_{ij}=1$. Thus, every client $j$ whose demand is
non-zero in $\MFN(\vec g^{\star},\vec x^{\star},\vec y')$, i.e.,
$d_j:=1-\sum_{i\in\cF}g^{\star}_{ij}>0$, is in $D_H$. On the other hand, for each $i\in I_H$ we have
$\sum_{j\in \cD} g^{\star}_{ij}=\sum_{j\in \cD} z_{ij}=U_i$ where the last equality follows from
property~\eqref{flow:prop1} of Lemma~\ref{lem:max-flow-properties}. Thus the commodity-oblivious
capacity of $i$, i.e., the arc capacity of $(i,i')$, is $y'_i(U_i-\sum_{j\in \cD}
g^{\star}_{ij})=0$. In summary, every client $j$ with nonzero demand is in $D_H$, and drains its flow at
facilities in $I\setminus I_H$ or at facilities in $S$. Now for any $j\in D_H$, we have
\begin{equation}\label{e:dhhalf}
d_j= 1-\sum_{i\in \cF} g^{\star}_{ij}=1-\sum_{i\in I_H} z_{ij}\geq \sum_{i\in I\setminus I_H} z_{ij}=\sum_{i\in I\setminus I_H}2 x^{\star}_{ij}
\end{equation}
where the second equality follows from the definition of $g^{\star}$, the inequality follows
from the fact that $\vec z$ is a fractional $b$-matching satisfying capacity one at $j$ and last equality
follows from Condition~\eqref{flow:prop2} of Lemma~\ref{lem:max-flow-properties}.

\newcommand{\thedefn}{\eqref{eq:defg}}

Consider a feasible flow $\vec f$ to $\MFN(\vg^{\star},\vx^{\star},\vy')$, decomposed into paths and cycles. We will call these paths and cycles \emph{flow paths} and \emph{flow cycles}, respectively. Without loss of generality, we can assume that there exist no flow cycles, and every flow path sends nonzero flow on it. The following claim simplifies our proof by letting us ignore the less interesting part of the flow network.

\begin{claim}\label{cl:flow1}
We can assume without loss of generality that no flow path contains node $k^s$ for $k\in\cD\setminus D_H$.
\end{claim}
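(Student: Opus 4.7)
The plan is to prove the claim by constructing a feasible flow $\vec f$ for $\MFN(\vg^{\star},\vx^{\star},\vy')$ that assigns zero flow to every arc incident to $k^s$ for all $k\in\cD\setminus D_H$. Any flow decomposition of such $\vec f$ then visits no $k^s$ with $k\in\cD\setminus D_H$, yielding the claim.

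I would first record two structural observations about $k\in\cD\setminus D_H$. (i) $k$ must be saturated by $\vec z$: otherwise $k$ would be reachable in $H$ from itself as an unsaturated client, placing $k$ in $D_H$. Hence $\sum_{i} g^{\star}_{ik}=\sum_{i\in I}z_{ik}=1$ and $d_k=0$. (ii) Every facility $i$ with $g^{\star}_{ik}>0$ lies in $I\setminus I_H$: indeed, $g^{\star}_{ik}>0$ forces $z_{ik}>0$, which makes the reverse arc $(i,k)$ present in $H$; if $i$ were in $I_H$, then $k$ would be reachable in $H$ from an unsaturated client through $i$, contradicting $k\notin D_H$.

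Because $d_k=0$, the node $k^s$ is neither a source nor a sink for any positive-demand commodity, so flow conservation at $k^s$ holds separately for every commodity. Consequently any flow through $k^s$ is ``pure transit'' and, given a feasible flow $\vec f$, I would cancel it by an exchange argument. Consider an $\epsilon$-chunk of commodity-$j$ flow traversing the segment $i_1\to k^s\to i_2$. By observation (ii), $i_1\in I\setminus I_H$, and by Lemma~\ref{lem:max-flow-properties}(\ref{flow:prop2}) we have $z_{i'j'}=2x^{\star}_{i'j'}$ for every $i'\in I\setminus I_H$ and $j'\in D_H$. This equality means the underlying $b$-matching arcs between $I\setminus I_H$ and $D_H$ are maxed out forward, which translates in $\MFN(\vg^{\star},\vx^{\star},\vy')$ into slack on the corresponding reverse arcs; combined with the bouncing structure of the shaded bipartite subgraph of Figure~\ref{fig:MFN}, this slack supplies an alternative $i_1\leadsto i_2$ subpath avoiding $k^s$ along which the $\epsilon$ units can be rerouted. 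Iterating across all $k\in\cD\setminus D_H$ terminates with a flow whose transit through every such $k^s$ is zero.

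The main obstacle will be verifying that this local rerouting simultaneously preserves every capacity constraint and commodity-wise flow conservation at all other nodes. The cleanest rigorous route I would take is via LP duality on the multi-commodity flow LP: if the modified network obtained by deleting the set $\{k^s:k\in\cD\setminus D_H\}$ were infeasible, then a metric certificate for its infeasibility (a length assignment violating the Okamura--Seymour-style cut condition) would extend to a certificate for $\MFN(\vg^{\star},\vx^{\star},\vy')$ by assigning each deleted $k^s$, say, the maximum potential among its neighbours. Because these deleted nodes are neither source nor sink of any positive-demand commodity, the demand side of the cut inequality is unaffected by the extension, while the capacity side can only grow; this would contradict the assumed feasibility of $\MFN(\vg^{\star},\vx^{\star},\vy')$ guaranteed just before the claim.
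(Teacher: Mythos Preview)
Your structural observations (i) and (ii) are correct and match the paper's setup. The gap is in both of your proposed arguments for the actual rerouting.

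\textbf{The exchange argument.} You claim that $z_{i'j'}=2x^{\star}_{i'j'}$ for $i'\in I\setminus I_H$, $j'\in D_H$ ``translates in $\MFN(\vg^{\star},\vx^{\star},\vy')$ into slack on the corresponding reverse arcs''. This confuses the $b$-matching residual graph $H$ with the flow network $\MFN$. In $\MFN$, the reverse arc $(i,j^s)$ has capacity $g^{\star}_{ij}$, and by the very definition of $\vg^{\star}$ in~\eqref{eq:defg} we have $g^{\star}_{ij}=0$ for $i\in I\setminus I_H$ and $j\in D_H$. So those reverse arcs have \emph{zero} capacity, not slack, and there is no obvious $i_1\leadsto i_2$ detour.

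\textbf{The duality argument.} Your contrapositive would, as written, apply to any zero-demand transit node in any multi-commodity flow network: ``delete a node that is neither source nor sink; if the restricted network is infeasible, extend the dual certificate by assigning the deleted node the maximum potential of its neighbours''. But deleting a zero-demand node can of course destroy feasibility in general, so the argument cannot be correct without further structural input. Concretely, extending the certificate requires assigning lengths to the restored arcs incident to $k^s$; large lengths preserve the shortest-path constraints but inflate $\sum_a c_a\ell_a$ (since arcs $(k^s,i)$ have positive capacity $x^{\star}_{ik}$), while small lengths may create short $j^s$--$j^t$ paths through $k^s$ that violate $z_j\le\operatorname{dist}(j^s,j^t)$. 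You have not shown how observation (ii) resolves this tension.

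\textbf{What the paper does instead.} The paper does not try to reroute from $i_1$ to $i_2$ at all. It observes (via your observation (ii)) that any flow path reaching some $k^s$ with $k\in\cD\setminus D_H$ must first pass through a facility $i_0\in I\setminus I_H$. It then \emph{truncates} every flow path at its first facility $i_0\in I\setminus I_H$ and sinks the commodity directly there via $i_0\to i_0'\to j^t$. This works because $y'_{i_0}=1$, so $(i_0',j^t)$ has full capacity $d_j$, and a short computation using $z_{i_0 j}=2x^{\star}_{i_0 j}$ for $j\in D_H$ together with $\sum_j z_{i_0 j}\le U_{i_0}$ shows that $(i_0,i_0')$ also has enough capacity. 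The key idea you are missing is that $I\setminus I_H$ acts as a set of universal sinks once $\vy$ has been rounded up to $\vy'$, so one should terminate there rather than route past it.
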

\begin{proof}
For an arbitrary client $k$ in $\cD\setminus D_H$, consider node $k^s$ in the flow network. Among its incoming arcs, every arc $(i,k^s)$ coming from $i\in I_H$ has zero capacity, since $g^{\star}_{ij}=0$ from Property~\eqref{flow:prop3} of Lemma~\ref{lem:max-flow-properties}. Arcs from $i\in S$ also have zero capacities (see \thedefn). Thus, the only incoming arcs to $k^s$ that have nonzero capacity are arcs $(i,k^s)$ where $i\in I\setminus I_H$.

Now we examine each flow path one by one and modify them, thereby defining a new flow. Suppose a
flow path $P$ starts at $j^s$ and ends at $j^t$. We truncate this path at the first facility in
$I\setminus I_H$ on the path, say $i_0$, and then send the flow directly to the sink $j^t$ by
appending $(i_0)-i_0'-j^t$ at the end. If the path does not contain any $i\in I\setminus I_H$, we do
nothing. We are making no other changes to the flow paths, including the commodity and the amount of
flow on them. Once this modification has been applied to all the paths, no flow path uses any arc of
form $(i,k^s)$ where $i\in I\setminus I_H$; the flow paths, therefore, cannot encounter any $k^s$
such that $k\in \cD\setminus D_H$. Recall that $k\in \cD\setminus D_H$ has zero demand and hence
$k^s$ cannot become the first (or last for $k^t$) node in a flow path.
  
  We claim that this
  modification maintains feasibility. Observe that the demand of each client is still satisfied
  since we have only rerouted the flow on a different path from $j^s$ to $j^t$. The  arcs on
  which the flow may have increased are the arcs $(i,i')$ where $i\in I\setminus I_H$ and the arcs $(i',
  j^t)$ where $i \in I\setminus I_H$ and $j\in D_H$. It is easy to verify that an arc of the latter type, say $(i', j^t)$, has
   its capacity constraint satisfied: the arc's capacity is $y'_i d_j = d_j$ and client $j$ pushes at most $d_j$ units of
  flow. Let us now consider an arc of the first type, $(i,i')$ for $i\in I \setminus I_H$. Note that, after the truncations, the only incoming
  arcs it receives flow on are the ones from $j\in D_H$. The total incoming
  flow to $i$ can therefore be bounded by the total capacity of these arcs, which is
\begin{align}
\sum_{j\in D_H} x^{\star}_{ij}& \leq \sum_{j\in D_H} 2x^{\star}_{ij}=\sum_{j\in D_H} z^{\star}_{ij}\leq U_i-\sum_{j\in\cD\setminus D_H} z^{\star}_{ij}= U_i-\sum_{j\in\cD\setminus D_H} g^{\star}_{ij}
,\end{align}
where we use the fact that $z^{\star}_{ij}=2x^{\star}_{ij}$ if $i\in I\setminus I_H$ and $j\in D_H$
from Lemma~\ref{lem:max-flow-properties} for the first equality, the fact that $\vec{z}$ is a fractional matching with bound $U_i$ at
facility $i$ for the second inequality, and for the last equality $g^{\star}_{ij}=z^{\star}_{ij}$ if $i\in I\setminus I_H$ and $j\in \cD\setminus
D_H$. The feasibility of the truncated flow now follows from that  the capacity of arc $(i,i')$ is $y_i'(U_i-\sum_{j\in \cD} g^{\star}_{ij})=U_i-\sum_{j\in\cD\setminus D_H} g^{\star}_{ij}$.
\end{proof}

We  proceed to prove that there is a flow of $\MFN(\vec g^\star, \vx^\star, \vy')$ satisfying the
properties of Lemma~\ref{lem:flow}.

If no client has positive demand, there is nothing to prove. For any $X\subseteq \cD$, let $d(X):=\sum_{j\in X} d_j$ denote the total demand of clients in $X$. Consider $J \subseteq \cD$ defined as the set of
clients that have nonzero demand and send the smallest fraction of their demand to sinks in $S$. Formally,
let
$$
\alpha = \min_{j:d_j>0}h(S,j)/d_j \qquad \mbox{and} \qquad J = \{j \in \cD: h(S,j)/d_j = \alpha, d_j>0
\}.
$$
 We shall show that if $\alpha < 1/2$ then we can modify $\vec{f}$ so that we
  either decrease the cardinality of $J$ or increase $\alpha$. A flow  that minimizes the
  lexicographic order of $(-\alpha, |J|)$ must therefore have $\alpha \geq 1/2$ which proves the
  lemma. We remark that here we are only considering flows consistent  with
  Claim~\ref{cl:flow1}, i.e., no flow path passes through a node $k^s$ with $k\in \cD
  \setminus D_H$. Also note that  a minimum $\alpha$ exists because $\alpha$ is a continuous function of flow, and the set of feasible flows for $\MFN(\vec g^{\star},\vec x^{\star},\vec y')$ is compact.

Now suppose that $\alpha < 1/2$. Note that $J\subset D_H$ since $d_j>0$ implies $j\in D_H$. Then, by
the definition of $J$,
$$
\sum_{j\in J} h(I\setminus I_H,j) = (1-\alpha)d(J) > d(J)/2 \geq \sum_{i\in I\setminus I_H , j\in J} x_{ij}^\star,
$$
where the last inequality follows from~\eqref{e:dhhalf}.
In other words, the total flow $\sum_{ j\in J} h(I\setminus I_H ,j)$ from clients in $J$ to
facilities in $I \setminus I_H$  is
strictly greater than the sum of capacities $\sum_{i\in I\setminus I_H, j\in J}
x^\star_{ij}$ of the arcs from clients in $J$ to facilities in $I \setminus I_H$.

Therefore, not all the flow paths originating from $J$ can enter $I\setminus I_H$ directly from
$J$. That is, for some $j\in J$, $k\notin J$, and $i\in I\setminus I_H$, there exists a flow path
$P$ that starts from $j^s$ but enters $i$ via $k^s$. Let $P_1$ denote the subpath between $j^s$ and
$k^s$, and we have $P=j^s-P_1-k^s-i-i'-j^t$. 

By  Claim~\ref{cl:flow1}, we have $k\in D_H\setminus J$
and the demand of $d_k$ is positive by~\eqref{e:dhhalf} and the fact that $x_{ik} >0$ since $P$
carried non-zero flow. 
This together with $k\notin J$, implies that $k$ sends strictly more than $\alpha$ fraction of its demand to sinks in $S$. Let $v_1, v_2, \ldots, v_\ell$ be the sinks in $S$
to which $k$ sends nonzero flows. Since $k$ can send at most $d_{k} y'_v$ flow to a facility $v\in S$, we have
\[
\sum_{i=1}^\ell d_{k} y'_{v_i} > \alpha{d_{k}}
\implies \sum_{i=1}^\ell  y'_{v_i} > \alpha.
\] Moreover, as $j$ sends $\alpha d_j$ units of flow to $S$, it sends at most $\alpha d_j$ units to $\{v_1,\ldots,v_\ell\}\subset S$; therefore, there exists some sink $v\in
\{v_1, \dots, v_\ell\}$ to which $j$ sends strictly less than $y'_vd_j$ units of flow. Let $Q$ be a flow path along which $k$ sends flow to $v$, written as $Q=k^s-Q_1-v-v'-k^t$ for some subpath $Q_1$.

Now we ``exchange'' the suffixes of $P$ and $Q$ by a small amount. To be precise, we choose a sufficiently small $\epsilon>0$, and decrease the flow on $P$ and $Q$ each by $\epsilon$, but increase the flow on $P'=j^s-P_1-k^s-Q_1-v-v'-j^t$ and $Q'=k^s-i-i'-k^t$ by $\epsilon$. $P'$ and $Q'$ will be introduced as new flow paths if they do not already exist. Note that this satisfies the demand constraints, since we have only changed the intermediate vertices between the same pair of source and sink. We further claim that we can choose a positive $\epsilon$ that satisfies the capacity constraints. Observe that every arc in $P'$ and $Q'$ was used in $P$ and $Q$, except for $(v',j^t)$ and $(i',k^t)$. For other arcs, the exchange only changes commodity types and capacities remain honored. Recall that $j$ sends strictly less than $y'_vd_j$ units of flow to $v$, and therefore $(v',j^t)$ was not previously saturated: if we choose a sufficiently small $\epsilon$, its capacity will not be violated. For $(i',k^t)$, its capacity is $y'_i d_k =d_k$ since $i\in I\setminus I_H$; hence its capacity constraint cannot be violated as long as we satisfy the demand constraints.

Note that this change lets $j$ send $\epsilon$ more flow to $S$ while decreasing the total amount of flow $k$ sends to $S$ by $\epsilon$. Since $k$ was sending strictly more than $\alpha d_k$ units of flow to $S$, we can choose $\epsilon>0$ so that both $j$ and $k$ will send strictly more than $\alpha$ fraction of their demands to $S$ after this modification. This either decreases $|J|$ by 1, or increases $\alpha$. Finally, observe that our modification did not introduce any nodes $j^s$ for $j\in\cD\setminus D_H$ back and therefore Claim~\ref{cl:flow1} can still be assumed.



\subsubsection{Proof of Lemma~\ref{lem:cost}: Bounding the Cost of the Semi-integral Solution}\label{sec:rounding}

In this section, we prove Lemma~\ref{lem:cost}. First, we verify that $(\hat{\vx},\hat{\vy})$ is semi-integral.
Let us first verify Conditions~\eqref{cond:semi2} and \eqref{cond:semi3} of semi-integrality. We have $\hat{y}_i=1$ for each $i\in I$ and $\hat{y}_i\leq \frac12$ for each $i\in S=\cF\setminus I$. For any $j\in \cD$, we have
$$\hat{d}_j=\sum_{i\in S} \hat{x}_{ij}=\sum_{i\in S}\frac{d_j\cdot h(i,j)}{h(S,j)}=d_j.$$
Note that $h(S,j)=0$ implies $d_j=0$. For each $j\in \cD$ and $i\in S$, we have $$\hat{x}_{ij}=\frac{d_j\cdot h(i,j)}{h(S,j)}\leq 2h(i,j)=2 \sum_{P\in \cP_{ij}}f(P)\leq 2{y}'_i d_j=\hat{y}_i d_j=\hat{y}_i\hat{d}_j$$
where the first inequality follows from Lemma~\ref{lem:flow} and the last inequality follows from the fact that the capacity of arc $(i',j^t)$ is ${y}'_id_j$ and all paths in $\cP_{ij}$ contain this arc.

Now, let us verify the assignment constraints. For every $j\in \cD$, we have
\begin{align*}
\sum_{i\in \cF} \hat{x}_{ij}=\sum_{i\in I}g^{\star}_{ij}+\sum_{i\in S} \frac{d_j\cdot h(i,j)}{h(S,j)}= \sum_{i\in I}g^{\star}_{ij} +d_j=1
\end{align*}
where the last equality follows from the definition of $d_j$. For every $i\in I$, we have

\begin{align*}
\sum_{j\in \cD} \hat{x}_{ij}=\sum_{j\in \cD} {g}^{\star}_{ij}\leq U_i= U_i\hat{y}_i
\end{align*}
 where the inequality follows from the fact that $\vg^{\star}$ is a fractional matching with bound $U_i$ at facility $i$. For each $i\in S$,
\begin{align*}
\sum_{j\in \cD} \hat{x}_{ij}=\sum_{j\in \cD} d_j\frac{h(i,j)}{h(S,j)}\leq \sum_{j\in \cD} 2 h(i,j)=2\sum_{j\in \cD}\sum_{P\in \cP_{ij}} f(P)\leq 2y^{\star}_i U_i=\hat{y}_i U_i
\end{align*}
where the first inequality again follows from the fact that $h(S,j)\geq \frac{d_{j}}{2}$ and the next inequality follows from the fact all the paths in the sum use the arc $(i,i')$ which has capacity $y_i' U_i$ and $y_i'=y_i^{\star}$ for each $i\in S$.

Now it remains to verify the cost of $(\hat{\vx},\hat{\vy})$. First we have $\hat{y_i}\leq 4 y^{\star}_i$ for each $i\in I$ and $\hat{y}_i=2y^{\star}_i$ for each $i\in S$. Thus we have $\sum_{i\in \cF} o_i \hat{y}_i\leq 4\sum_{i\in \cF} o_iy^{\star}_i$. Now we bound the assignment cost of assignment $\hat{\vx}$. Firstly, the assignment cost to facilities in $I$ can be bounded by the cost of $\vg^{\star}$ which is smaller than cost of $2\vx^{\star}$. To bound the assignment cost to facilities in $S$, let us consider the flow problem and assign cost $c_{ij}$ to each arc $(j^s,i)$ and $(i,j^s)$, and zero to all other arcs. Observe that this together with the triangle inequality implies that every path $P\in \cP_{ij}$ has cost $c(P)\geq c_{ij}$. Thus we have,
\begin{align*}
\sum_{i\in S, j\in \cD} c_{ij} \hat{x}_{ij}\leq \sum_{i\in S, j\in \cD} c_{ij} 2 h(i,j)\leq 2 \sum_{i\in S, j\in \cD} \sum_{P\in \cP_{ij}} c_{ij} f(P) \leq   2 \sum_{i\in S, j\in \cD} \sum_{P\in \cP_{ij}} c(P) f(P)\\
 \leq 2 \sum_{i\in \cF,j\in \cD} c_{ij}\left(  \sum_{P: (j^s,i)\in P } f(P)+\sum_{P: (i,j^s)\in P } f(P)\right) \leq 2\sum_{i\in \cF, j\in \cD} c_{ij} ( x^{\star}_{ij} +2x^{\star}_{ij})= 6\sum_{i\in \cF, j\in \cD} c_{ij} x_{ij}^{\star}
\end{align*}
where the last inequality follows the fact the capacity of arc $(j^s,i)$ is $x^{\star}_{ij}$ and the capacity of arc $(i,j^s)$ is $g^{\star}_{ij}\leq 2x^{\star}_{ij}$.
Thus the total assignment cost is at most $8\sum_{i\in \cF, j\in\cD} c_{ij}x^{\star}_{ij}$. Thus we have $c(\hat{\vx},\hat{\vy})\leq 8 c(\vx^{\star},\vy^{\star})$ and Lemma~\ref{lem:cost} holds.



\begin{figure}[!h]
\centering
\fbox{\parbox{\textwidth}{
\textsl{Input:} Fractional solution $(\vx^{\star},\vy')$.

\textsl{Output:} Either a violating constraint or a semi-integral solution $(\hat{\vx},\hat{\vy})$.

\begin{enumerate}

\item Let $G=(\cD,I,E)$ be the complete bipartite graph whose bipartition is given by $\cD$ and $I$. An arc $(j,i)$ where $j\in \cD$ and $i\in I$ is given a capacity of $2x^{\star}_{ij}$. Every client $j$ has a capacity of one and each facility $i\in I$ is given a capacity of $U_i$.

\item Let $\vec{z}$ denote the maximum fractional capacitated $b$-matching on the capacitated bipartite graph $G$.
\item Let $H$ denote the support of the residual network of the fractional matching $\vec{z}$. In other words, $H=\{(j,i):z_{ij} <2x^{\star}_{ij} \}\cup \{(i,j): z_{ij}>0\}.$
\item Let a client $j\in \cD$ be called \emph{saturated} if $\sum_{i\in I} z_{ij} =1$ and \emph{unsaturated} otherwise. Let
\begin{align*}
I_H &=  \{i \in I: i \mbox{ is reachable in $H$ from some client $k$ that was unsaturated}\}; \\
D_H & = \{j \in \cD: j \mbox{ is reachable in $H$ from some client $k$ that was unsaturated}\}.
\end{align*}

\item  For each $i\in \cF$ and $j\in \cD$, we let
$$g^{\star}_{ij} =
\begin{cases}
z_{ij}&  \mbox{if } i \in I_H\\
z_{ij}& \mbox{if } i\in I\setminus I_H, j\in \cD\setminus D_H\\
0& \mbox{if }i\in I\setminus I_H, j\in  D_H\\
0& \mbox{if }i\in S
\end{cases}
$$

\item Solve $\MFN(\vg^{\star},\vx^{\star},\vy^{\star})$ using Lemma~\ref{lem:linprog}.
\begin{itemize}
\item If Lemma~\ref{lem:linprog} gives a violating constraint, return it.
\item Else let $\vec{f}$ denote the feasible flow satisfying the guarantee in Lemma~\ref{lem:flow}. Then return $(\hat{\vx},\hat{\vy})$ defined as follows.

$$\hat{y}_i =
\begin{cases}
1&  \mbox{if } i \in I\\
2y^{\star}_i& \mbox{if }i\in S
\end{cases}
 ~~~~ \qquad \qquad \qquad~~~~
\hat{x}_{ij} =
\begin{cases}
g^{\star}_{ij}&  \mbox{if } i \in I, j\in \cD\\
d_j\frac{ h(i,j)}{h(S,j)} &\mbox{if } i\in S, j\in \cD
\end{cases}
$$

\end{itemize}
\end{enumerate}
}}
\caption{Overview of algorithm for Theorem~\ref{thm:mainalg2}.}\label{fig:partial}
\end{figure}

\paragraph{Acknowledgments.}
We thank Tim Carnes and David Shmoys for inspiring discussions during the early stages of this work.
We also thank the anonymous reviewers of the conference version of this paper for their helpful comments.

\newpage
\bibliographystyle{plain}
\bibliography{capfl}

\appendix
\section{Formal Proof of Separation Lemma}\label{ap:lp}
We give the formal proof of Lemma~\ref{lem:linprog}.

\begin{replemma}{lem:linprog}
Given $\vec{ g^{\star}}$ in addition to $(\vec x^{\star},\vec y^{\star})$ such that $\MFN(\vec{ g^{\star}},\vec x^{\star},\vec y^{\star})$ is infeasible, we can find in polynomial time a violated inequality, i.e., an inequality that is valid for \MFNLP but violated by $(\vec x^{\star},\vec y^{\star})$. Moreover, the number of bits needed to represent each coefficient of this inequality is polynomial in $|\cF|$, $|\cD|$, and $\log U$, where $U:=\max_{i\in\cF}U_i$.
\end{replemma}

\begin{proof}
  Suppose $\vec{ g}^\star$ is not integral. Since the $b$-matching polytope is integral, we can
  decompose $\vec{g}^\star$ into a convex combination of polynomially many integral $b$-matchings
  $\{\vec{\hat g}_i\}_{1\leq i\leq r}$. Note that such a decomposition can be found in polynomial
  time~\cite{GLS1981}. As was seen in the proof of Lemma~\ref{lem:bing}, at least one of these
  integral $\vec{\hat g}_i$'s renders $\MFN(\vec{\hat g}_i,\vec x^\star,\vec y^\star)$ infeasible. Hence,
  from now on, we will assume $\vec{g}^\star$ is integral, since otherwise we can find in polynomial
  time an integral $\vec{\hat g}_i$ for which $\MFN(\vec{\hat g}_i,\vec x^\star,\vec y^\star)$ is
  infeasible.

  Note that the graph topology of $\MFN(\vec g,\vec x,\vec y)$ does not depend on $(\vec g,\vec
  x,\vec y)$: some arcs may have zero capacities depending on $(\vec g,\vec x,\vec y)$, but the
  underlying digraph of $\MFN(\vec g,\vec x,\vec y)$ is defined independently from $(\vec g,\vec
  x,\vec y)$. It is only the capacities of these arcs that are defined as (linear) functions of
  $(\vec g,\vec x,\vec y)$. Let $\cA$ denote the set of arcs in $\MFN(\cdot,\cdot,\cdot)$ and
  $\cP_j$ the family of all $j^s - j^t$ paths. Let $c_a(\vec g,\vec{x}, \vec{y})$ be the capacity of
  arc $a\in \cA$ in $\MFN(\vec g,\vec x,\vec y)$ and $d_j(\vec{g})$ be the demand of client
  $j$. Note that $d_j(\vec{g})$ does not depend on $\vec x$ or $\vec y$.

  As Onaga~\cite{Onaga70} and Iri~\cite{Iri71} showed, it follows from Farkas' lemma that
  $\MFN(\vec{g}^\star,\vec x,\vec y)$ is feasible if and only if for all $\vec z\in\mathbb{R}_+^\cD$
  and $\vec \ell\in\mathbb{R}_+^\cA$ satisfying\begin{align}
    &z_j \leq \sum_{a\in P} \ell_a,  \qquad \forall j\in \cD, P\in \cP_j;\textrm{ and}\label{e:seplp:1}\\
    & \vec{0} \leq \vec{z}, \vec{\ell} \leq \vec{1},\label{e:seplp:2}
\end{align}the following holds:\begin{equation}\label{e:mfdual}
\sum_{j\in \cD}d_j(\vec{g}^\star) z_j  \leq \sum_{a\in \cA} c_a(\vec{g}^\star, \vec{x}, \vec{y}) \ell_a
.\end{equation}

Thus, as $\MFN(\vec{g}^\star,\vec x^\star,\vec y^\star)$ is infeasible, there exists $(\vec z^\star,\vec l^\star)$ such that\begin{align*}
             &z^\star_j \leq \sum_{a\in P} \ell^\star_a,  & \forall j\in \cD, P\in \cP_j;\\
             & \vec{0} \leq \vec{z}^\star, \vec{\ell}^\star \leq \vec{1};&\\
&\sum_{j\in \cD}d_j(\vec{g}^\star) z^\star_j  > \sum_{a\in \cA} c_a(\vec{g}^\star, \vec{x}^\star, \vec{y}^\star) \ell^\star_a.&
\end{align*}Our separation oracle finds such $(\vec z^\star,\vec l^\star)$ by solving the following LP:
\begin{align*}
\textrm{minimize }\quad & \sum_{a\in \cA} c_a(\vec{g}^\star, \vec{x}^\star, \vec{y}^\star) \ell_a - \sum_{j\in \cD}d_j(\vec{g}^\star) z_j,\\
\textrm{subject to }\quad &\eqref{e:seplp:1}-\eqref{e:seplp:2}.
\end{align*}Note that this LP can be solved in polynomial time by using Djikstra's algorithm as the separation oracle. We will in particular find an extreme point solution to this LP.

Now we have\[
\sum_{j\in \cD}d_j(\vec{g}^\star) z^\star_j  > \sum_{a\in \cA} c_a(\vec{g}^\star, \vec{x}^\star, \vec{y}^\star) \ell^\star_a
;\]but on the other hand, any feasible solution $(\vec x,\vec y)$ to \MFNLP has to satisfy\begin{equation}\label{e:ve}
\sum_{j\in \cD}d_j(\vec{g}^\star) z^\star_j  \leq \sum_{a\in \cA} c_a(\vec{g}^\star, \vec{x}, \vec{y}) \ell^\star_a,
\end{equation}since this is a necessary condition for $\MFN(\vec{g}^\star,\vec x,\vec y)$ to be feasible. Thus we output \eqref{e:ve} as a violated inequality. Recall that $c_a(\vec{g}^\star, \vec{x}, \vec{y})$ is a linear function in $(\vec x,\vec y)$; hence \eqref{e:ve} is a linear inequality in $(\vec x,\vec y)$.

Now it remains to verify that each coefficient of \eqref{e:ve} can be represented in
$\poly(|\cF|,|\cD|,\log U)$ bits. $d_j(\vec g)$ is a constant that is either 0 or 1;
$c_a(\vec{g}^\star, \vec{x}, \vec{y})$ is a linear function in $(\vec x,\vec y)$ where every
coefficient is an integer between 0 and $U$. Finally, note that every coefficient of
\eqref{e:seplp:1} and \eqref{e:seplp:2} are $O(1)$; hence, as we have chosen $(\vec z^\star,\vec \ell^\star)$ as
an extreme point solution, $(\vec z^\star,\vec \ell^\star)$ can be represented in $\poly(|\cF|,|\cD|)$
bits. Thus, \eqref{e:ve} can be written as a linear inequality in $(\vec x,\vec y)$ where every
coefficient is represented using $\poly(|\cF|,|\cD|,\log U)$ bits.
\end{proof}

\section{Rounding a Semi-Integral Solution to an Integral Solution}\label{sec:soft}

We present in this appendix how to round a semi-integral solution to an integral solution using known bi-criteria algorithms for the soft-capacitated problem with general demands.

\begin{replemma}{lemma:main3}
 Given a semi-integral solution $(\hat{\vx},\hat{\vy})$, we can in polynomial time find  an integral solution $(\bar{\vx},\bar{\vy})$ whose cost is at most $36 c(\hat{\vx},\hat{\vy})$.
\end{replemma}

\begin{proof}
 Let $(\hat{\vx},\hat{\vy})$ be a semi-integral solution. We consider the residual instance on facilities in $S$, where each client $j\in \cD$ is assigned to these facilities by the fraction of $\hat d_j=\sum_{i\in S} \hat{x}_{ij}$. Using this residual solution, we now formulate an instance 
 with general demands as follows.

 In the capacitated facility location problem with general demands, in addition to the standard input of the \cfl, we are also given $\bar d_j\geq 0$ for each client $j$ and goal is to open a subset of facilities and assign clients to facilities (possibly by splitting demand among multiple facilities) such that total demand assigned to an open facility is no more than its capacity. The objective function is the sum of cost of opened facilities and cost of the assignment. The cost of assigning $d$ amount demand of client $j$ to facility $i$ is $d\cdot c_{ij}$.  Figure~\ref{fig:lpdemand} shows the standard LP relaxation for the facility location problem with general demands. We will call this program $\LPdemand$. We denote the capacity of facility $i$ by $U_i'$ to differentiate it from the capacities in our problem.
\begin{figure}[h!]
\begin{equation*}
\boxed{%
	\begin{array}{lrclr}
          \textrm{minimize}   &  \multicolumn{4}{l}{\displaystyle \sum_{i\in \cF} o_i y_i + \sum_{i \in \cF, j \in \cD}c_{ij}x_{ij}}     \\[8mm]
	\textrm{subject to} &\displaystyle \sum_{i \in \cF} x_{ij} & \displaystyle=
        &\displaystyle \bar d_j, &\qquad \displaystyle \forall j \in
        \cD\\[4mm]
 &\displaystyle \bar d_j y_i & \displaystyle \geq & \displaystyle x_{ij}, & \displaystyle\qquad \forall i\in \cF, j \in \cD\\[4mm]
 & \displaystyle U_iy_i & \displaystyle \geq & \displaystyle \sum_{j\in \cD} x_{ij}, & \displaystyle\qquad \forall i\in \cF\\[4mm]
  & \displaystyle y_i & \displaystyle \leq & 1, & \displaystyle\qquad \forall i\in \cF\\[4mm]
& \displaystyle \vec{x},\vec{y} &\displaystyle    \geq & \displaystyle 0.
	\end{array}
}
\end{equation*}
\caption{$\LPdemand$.}
\label{fig:lpdemand}
\end{figure}

The following theorem follows from Abrams et al~\cite{AMMP2002}.

\begin{theorem}[\cite{AMMP2002}]\label{thm:abrams}
Given a feasible solution $(\vx,\vy)$ to $\LPdemand$, there exists a solution $(\brx,\bry)$ such that $\bry$ is integral and satisfies $\LPdemand$ if $U_i'$ is replaced by $2U_i'$. Moreover, $c(\brx,\bry)\leq 18c(\vx,\vy)$.
\end{theorem}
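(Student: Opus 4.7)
The plan is to prove Theorem~\ref{thm:abrams} by the standard filter--cluster--open--assign template for LP-based facility location, adapted to the general-demand setting and tuned to exploit the doubled-capacity slack that the statement permits. The proof would proceed in three phases: a filtering step restricting each client to ``close'' facilities, a clustering step that groups clients to be served together, and an opening/assignment step that commits to a single integrally opened facility per cluster.

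First I would scale the openings by setting $y_i':=\min\{2y_i,1\}$, which at worst doubles the facility-opening cost but yields the key slack $x_{ij}\le\tfrac12 \bar d_j y_i'$; this slack is exactly what will later translate fractional capacities $U_i'$ into the allowed $2U_i'$. For filtering, define $C_j:=\tfrac{1}{\bar d_j}\sum_i c_{ij} x_{ij}$, fix a constant $\alpha>1$, zero out every $x_{ij}$ with $c_{ij}>\alpha C_j$, and rescale the surviving $\vx$ by $\alpha/(\alpha-1)$; by Markov at least a $(1-1/\alpha)$-fraction of each $\bar d_j$ survives, so demand feasibility is restored at the cost of an $\alpha/(\alpha-1)$ blow-up of the connection cost. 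After filtering, client $j$ is served only from its ball $F(j):=\{i:c_{ij}\le\alpha C_j\}$.

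The clustering step processes clients in increasing order of $C_j$: whenever the lowest remaining client $j^\ast$ is picked, every still-unclustered $j'$ whose $F(j')$ meets $F(j^\ast)$ joins its cluster and is removed. For each cluster I would open a single facility $i^\ast\in F(j^\ast)$, chosen by an averaging argument over the $y_i'$-mass of $F(j^\ast)$ so that $o_{i^\ast}$ is at most a constant factor times the fractional opening cost charged to $F(j^\ast)$ and so that $U_{i^\ast}$ is at least a constant fraction of the weighted average capacity of $F(j^\ast)$; all cluster demand is then routed to $i^\ast$. A triangle-inequality argument along $j'\to i\to j^\ast\to i^\ast$ with any $i\in F(j')\cap F(j^\ast)$ bounds $c_{i^\ast j'}$ by a constant multiple of $C_{j'}$ (using $C_{j'}\ge C_{j^\ast}$ from the ordering), and summing over clients gives a constant-factor bound on the total connection cost; the disjointness of the $F(j^\ast)$'s across cluster centers yields the analogous constant-factor bound on total opening cost. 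Tuning $\alpha$ makes these constants combine to the stated $18$.

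The main obstacle is the capacity analysis, which is where the factor $2$ in the relaxed capacity has to be spent. I must show that the demand sent to each $i^\ast$ is at most $2U_{i^\ast}'$. The argument would use that every cluster member $j'$ routes its full demand into $F(j')$ in the fractional solution, and that facilities in $F(j')\cap F(j^\ast)$ certify that the total cluster demand is dominated by $\sum_{i\in F(j^\ast)} U_i y_i'$ after the $\vy$-doubling; the averaging used to pick $i^\ast$ then guarantees that $U_{i^\ast}$ is at least half of this weighted capacity, yielding an overload of at most $2$. The delicate point is that $i^\ast$ must simultaneously be cheap in opening cost \emph{and} large in capacity, so a single averaging cannot handle both; the joint selection (with weights $o_i y_i'$ governing cost and $U_i y_i'$ governing capacity), together with the calibration of $\alpha$ against the doubling factor, is what simultaneously delivers the $18$ in the cost and the $2$ in the capacity slack.
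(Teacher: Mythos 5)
First, note that the paper does not prove Theorem~\ref{thm:abrams} at all: it is imported as a black box from Abrams et al.~\cite{AMMP2002}, and Appendix~\ref{sec:soft} only shows how to invoke it on the residual instance. So you are attempting to reprove a cited result, which is legitimate, but your sketch has a gap that is fatal to the architecture you chose, not just a missing computation.

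The problem is the capacity analysis, exactly where you flag ``the main obstacle.'' Your plan opens a single facility $i^\ast$ per cluster and routes the \emph{entire} cluster demand to it. That cannot yield a capacity violation bounded by $2$: take a cluster containing $n$ unit-demand clients with all facilities of capacity $1$ sitting at mutual distance zero (the LP is feasible by spreading $y$ over $n$ facilities); everything collapses into one cluster, and whichever single facility you open receives load $n$ against capacity $2$. The inequality you invoke to prevent this --- that the total cluster demand is dominated by $\sum_{i\in F(j^\ast)} U_i y_i'$ --- is false: a client $j'$ joins the cluster because $F(j')$ merely \emph{intersects} $F(j^\ast)$, so almost all of $j'$'s fractional service can come from facilities outside $F(j^\ast)$, and the greedy clustering places no bound on how many such clients accumulate. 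A correct rounding for a capacitated objective must keep each client's demand essentially where the fractional solution placed it and reroute only the portion assigned to facilities that end up closed; even then, choosing which facility in a ball to open requires trading off opening cost against capacity, which your ``joint selection'' sentence asserts but never supplies. Separately, the scaling $y_i'=\min\{2y_i,1\}$ does not give $x_{ij}\le\tfrac12\bar d_j y_i'$ when $y_i>\tfrac12$ (then $y_i'=1$ while $x_{ij}$ may be as large as $\bar d_j y_i>\tfrac12\bar d_j$); that case must be split off, e.g.\ by opening all facilities with $y_i\ge\tfrac12$ integrally. As written, the proposal does not establish the theorem.
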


We now complete the proof of Lemma~\ref{lemma:main3} using Theorem~\ref{thm:abrams}.
Consider the facility location problem with fractional demands created by facilities in $S$ and demand $\bar d_j=\hat d_j =\sum_{i\in S} \hat{x}_{ij}\leq 1$ for each client $j\in \cD$. Let $U_i'=\frac{U_i}{2}$. We decompose the semi-integral solution $(\hat\vx,\hat\vy)$ into $(\hat\vx^S,\hat\vy^S)\in[0,1]^{S\times\cD}\times[0,\textstyle\frac{1}{2}]^S$ and $(\hat\vx^I,\hat\vy^I)\in[0,1]^{I\times\cD}\times\{1\}^I$ by naturally restricting $(\hat\vx,\hat\vy)$ to $I$ and $S$, respectively. Then $(\hat{\vx}^S, 2\hat{\vy}^S)$ is a feasible solution to $\LPdemand$. Observe that here we use the fact that $\hat{y}^S_i\leq \frac{1}{2}$ for each $i\in S$ and therefore we have $2\hat{y}^S_i\leq 1$ and the upper bound constraints are satisfied. Feasibility of the rest of the constraints follows from semi-integrality of $(\hat{\vx},\hat{\vy})$. We run the algorithm from Theorem~\ref{thm:abrams} on $(\hat{\vx}^S, 2\hat{\vy}^S)$, and it returns a solution $(\brx^S,\bry^S)$ such that $\bry^S\in\{0,1\}^S$ is integral and capacity constraints are satisfied with respect to $2U_i'=U_i$. Moreover, $c(\brx^S,\bry^S)\leq 18c(\hat{\vx}^S, 2\hat{\vy}^S)\leq 36c(\hat{\vx}^S, \hat{\vy}^S)$.

We concatenate $(\hat\vx^I,\hat\vy^I)\in[0,1]^{I\times\cD}\times\{1\}^I$ and $(\brx^S,\bry^S)\in[0,1]^{S\times\cD}\times\{0,1\}^S$ to obtain $(\brx',\bry)\in[0,1]^{\cF\times\cD}\times\{0,1\}^\cF$. Observe that now $(\brx',\bry)$ satisfies the following constraints.
\begin{enumerate}
\item\label{c:final:q:1} $\bry$ is integral.
\item\label{c:final:q:2} For each $j\in \cD$, we have $\sum_{i\in \cF}\bar{x}'_{ij}= \sum_{i\in I} \hat x^I_{ij}+ \sum_{i\in S} \bar y^S_{ij}=1-\hat d_j+\bar d_j=1$.
\item\label{c:final:q:3} For each $i\in \cF$, we have $\sum_{j\in \cD}\bar{x}'_{ij}\leq U_i$ if $\bar{y}_i=1$ and $\sum_{j\in \cD}\bar{x}'_{ij}=0$ otherwise.
\end{enumerate}

Now we solve a minimum cost $b$-matching problem to find an integral assignment of clients to facilities which are opened by $\bry$. We let $F$ denote the set of facilities with $\bar{y}_i=1$. We form a bipartite graph with clients in $\cD$ on one side and facilities in $F$ on the other side. The cost of edge between $i\in F$ and $j\in \cD$ is $c_{ij}$ and solve the following linear program as given in Figure~\ref{fig:bmatching}. It is straightforward to see that $\brx'$ is a feasible solution to this linear program. The integrality of the $b$-matching problem implies that there exists an integral solution $\brx$ such that $c(\brx)\leq c(\brx')$.
\begin{figure}[h!]
\begin{equation*}
\boxed{%
	\begin{array}{lrclr}
          \textrm{minimize}   &  \multicolumn{4}{l}{\displaystyle \sum_{i \in F, j \in \cD}c_{ij}x_{ij}}     \\[8mm]
	\textrm{subject to} &\displaystyle \sum_{i \in F} x_{ij} & \displaystyle=
        &\displaystyle 1, &\qquad \displaystyle \forall j \in
        \cD\\[4mm]
 & \displaystyle \sum_{j\in \cD} x_{ij}  & \displaystyle \leq &  \displaystyle U_i,& \displaystyle\qquad \forall i\in F\\[4mm]
& \displaystyle \vec{x} &\displaystyle    \geq & \displaystyle 0.
	\end{array}
}
\end{equation*}
\caption{$b$-matching Linear Program.}
\label{fig:bmatching}
\end{figure}

Together with $\bry$, we obtain that $(\brx,\bry)$ is a feasible solution to the capacitated facility location problem and\begin{eqnarray*}
c(\brx,\bry)&\leq& c(\brx',\bry)=c(\hat\vx^I,\hat\vy^I)+c(\brx^S,\bry^S)\leq c(\hat\vx^I,\hat\vy^I)+36c(\hat{\vx}^S, \hat{\vy}^S)
\leq 36(c(\hat\vx^I,\hat\vy^I)+c(\hat{\vx}^S, \hat{\vy}^S))\\
&\leq& 36c(\hat{\vx},\hat{\vy})
\end{eqnarray*}
as claimed.
\end{proof}


\end{document}